\theoremstyle{plain}
\newtheorem{df}{Definition}
\newtheorem{thm}{Theorem}
\newtheorem{exa}{Example}
\newtheorem{lemma}{Lemma}
\begin{document}
\title{Logic of Awareness for Nested Knowledge}
\author{Yudai Kubono}  
\affil{\textit{Graduate School of Science and Technology, Shizuoka University,} \par \textit{Ohya, Shizuoka 422-8529, Japan} \par \textit{yudai.kubono@gmail.com}}
\date{February 1, 2024}       
\maketitle
\doublespacing

\begin{abstract}
  Reasoning abilities of human beings are limited. Logics that treat logical inference for human knowledge should reflect these limited abilities. Logic of awareness is one of those logics. In the logic, what an agent with a limited reasoning ability actually knows at a given moment (\textit{explicit knowledge}) is distinguished from the ideal knowledge that an agent obtains by performing all possible inferences with what she already knows (\textit{implicit knowledge}) \cite{Fernandez-Fernandez2021-FERAIL-2}. 
  
  This paper proposes a logic for explicit knowledge. In particular, we focus more on nested explicit knowledge, which means another agent's knowledge that an agent actually knows at a given moment. We develope a new formalization of two ideas and propose Kripke-style semantics. The first idea is the effect on an agent's reasoning ability by a state of an agent's awareness. We incorporate a relation on possible worlds called an \textit{indistinguishable relation} to represent ignorance due to lack of awareness. The second idea is a state of each agent's awareness in the other agent's mind. We incorporate a non-empty finite sequence of agents called \textit{a chain of belief for awareness}. Our logic is called \textit{Awareness Logic with Partitions and Chains} ($\mathcal{ALPC}$). Employing an example, we show how nested explicit knowledge is formalized with our logic. Thereafter, we propose the proof system and prove the completeness. Finally, we discuss directions for extending and applying our logic and conclude. Our logic offers a foundation for a formal representation of human knowledge. We expect that the logic can be applied to computer science and game theory by describing and analyzing strategic behavior in a game and practical agent communication.
\end{abstract}

\textit{Keywords}: Awareness, Explicit knowledge, Epistemic Logic, Modal logic, Multi-agent systems.

\section{Introduction}

Epistemic Logic ($\mathrm{EL}$), also known as the logic of knowledge, formalizes logical inference for knowledge. The original interest is to elucidate the properties of knowledge. More recently, researchers in fields such as computer science and economics have become interested in the logic as a tool to formally analyze the effect of communicative protocols on knowledge of systems and strategic situations where a player makes a decision with knowledge of the other's action \cite{fagin1995reasoning}. One of the research directions is formalizing the knowledge of beings with limited reasoning abilities, such as humans and computers, rather than abstract entities. Another direction is formalizing knowledge of another agent's knowledge in multi-agent cases, called nested knowledge. This paper proposes a logic for nested knowledge of human beings. We expect that our logic offers a foundation for theoretical research in computer science and game theory. 

$\mathrm{EL}$ is defined as a modal logic, where a necessity operator in a modal logic is interpreted as a knowledge operator. This means that the knowledge operator inherits logical properties that the necessity operator possesses. Propositions are denoted by $\varphi,\psi$ and a knowledge operator with agent $a$ is denoted by $K_a$, which together with a proposition reads ``$a$ knows $\varphi$.'' We may write $K\varphi$ anonymously. In epistemic logic, as well as modal logic, we employ \textit{Modus Ponens} ($\mathrm{MP}$)\footnote{From $\varphi$ and $\varphi\to\psi$, we conclude $\psi$.}, $\mathrm{RN}$\footnote{$\mathrm{RN}$: If $\varphi$ is valid, then agent $a$ knows $\varphi$.}, and $\mathrm{K}$\footnote{$\mathrm{K}$: $K(\varphi\to\psi) \to (K\varphi \to K\psi)$.} for logical inference. This means that when agent $a$ knows $\varphi$, and $\varphi \to \psi$ is the truth, $\psi$ must be her knowledge. This property does not matter when an agent is interpreted as an abstract entity. However, when the concept of an agent is applied to a human being, it matters. A real human's reasoning ability is limited and does not obtain knowledge through such exhaustive reasoning. The gap between knowledge in $\mathrm{EL}$ and human knowledge is called the problem of \textit{logical omniscience}. 

Logic of awareness is one of the solutions to the problem. Awareness Logic \cite{fagin1988belief}, which is the seminal study in this field, distinguishes between knowledge that an agent can actually use for their inference at a given moment and knowledge that an agent cannot actually use for it. The former is called \textit{explicit knowledge}, which means what an agent with a limited reasoning ability actually knows at a given moment. On the other hand, the latter is called \textit{implicit knowledge}, which means the ideal knowledge that an agent obtains by performing all possible inferences with what she already knows. We can say that ``implicit knowledge is the best an agent can do if she were omniscient'' \cite[p.38]{Fernandez-Fernandez2021-FERAIL-2}. In the logic, even if an agent knows $\varphi$, and $\varphi \to \psi$ is the truth, $\psi$ is not necessarily her knowledge anymore because $\varphi$ or $\varphi \to \psi$ may not be explicit knowledge. She may not actually know these propositions at a given time. Agent $a$'s explicit knowledge is denoted as $E_a$, and the implicit knowledge is denoted as $I_a$. The logic has been developed in various fields such as philosophy, computer science, and economics to analyze an agent's limited reasoning \cite{van2015handbook}.

Before going into detail about logic of awareness, we explain the semantical idea in $\mathrm{EL}$. Its semantics are given by a Kripke model. The model is a tuple consisting of a set of possible worlds $W$, an equivalent relation $R_a$ on $W$ for each agent $a$ called an accessibility relation, and a valuation $V$. The truth of a formula $K_a \varphi$ at a world $w$ is defined by for all possible worlds accessible on $R_a$ from $w$, $\varphi$ is true. A possible world represents a possibility, and possible worlds accessible from $w$ on $R_a$ represent all the possibilities that agent $a$ considers. This means a binary relation can be interpreted as a representation of an agent's reasoning ability. If an agent can access only the actual world, she knows everything that holds in actuality. As a human being, an agent accesses possibilities other than the actual one. If an agent accesses a world where $\varphi$ is false, $\neg K_a \varphi$ holds at $w$, which means ``agent $a$ does not know $\varphi$.''


In Awareness Logic \cite{fagin1988belief}, an \textit{awareness set} is incorporated to a Kripke model. This set represents a state of an agent's awareness for formalizing explicit and implicit knowledge. In the logic, implicit knowledge is defined in the same way as the knowledge in $\mathrm{EL}$, and explicit knowledge is defined by implicit knowledge contained in an awareness set, which means implicit knowledge of which an agent is aware. ``$\varphi$ is explicit knowledge'' equals the conjunction of ``$\varphi$ is implicit knowledge'' and ``$\varphi$ is an aware proposition.'' This idea is simple and intuitive and has become one of the main approaches in logic of awareness. 

Nested knowledge, which is knowledge of another agent's knowledge, appears in multi-agent cases. Among such knowledge, common knowledge has a distinctive property. This knowledge is defined by infinite iterations of ``every agent knows.'' It is because that ``everyone knows' is not enough, but a mutual acknowledgment of ``everyone knows' is necessary \cite{fagin1995reasoning}. It is known that common knowledge plays a significant role in people's agreement in economic \cite{aumann1976agreeing}. 

We formalize nested explicit knowledge in the field of logic of awareness. We argue that there are two ideas to formalize to represent such knowledge, and we propose a new formalization incorporating these ideas. Note that the scope of this paper is limited to the formalization of nested explicit knowledge, which means it does not include that of common knowledge. 

The first one is the effect of an agent's awareness on her own reasoning. An agent's awareness affects not only whether she is aware of propositions but also her reasoning ability, which means a set of possible worlds that she can access. The previous study captured only the former. When agent $a$ is not aware of $q$ and $\neg q$, she cannot consider two possibilities: the worlds with only a different valuation for $q$. It is because she should not identify the distinction in the first place. The states of awareness of $a$ and $b$ may differ, in which case $a$'s reasoning ability $R_a$ limited by the state of $a$'s own awareness may differ from $R_a$ reflecting the state of $b$'s awareness. In the model depicted in Figure 1, $a$ herself distinguishes between $w_1$ and $w_2$, but an imaginary $a$ with the same awareness as $b$ does not distinguish them. Thus, the notion of an accessible relation needs to be extended.

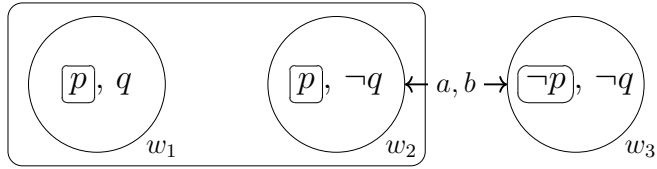
\begin{figure}[h]
  \centering
\scalebox{0.9}[0.9]{
\begin{tikzpicture}
\draw[rounded corners=6pt] (-1.3,1.2)--(4.8,1.2)--(4.8,-1.2)--(-1.3,-1.2)--cycle;
\draw(0,0)circle(1);
\draw(0.95,-0.95)node{{\large $w_1$}};
\draw(3.5,0)circle(1);
\draw(4.45,-0.95)node{{\large $w_2$}};
\draw(7,0)circle(1);
\draw(7.95,-0.95)node{{\large $w_3$}};
\draw(0,0)node{{\Large \ovalbox{{$p$}}, $q$}};
\draw(3.5,0)node{{\Large \ovalbox{{$p$}}, $\neg q$}};
\draw(7,0)node{{\Large \ovalbox{{$\neg p$}}, $\neg q$}};
\draw[thick, <->](4.5,0)--(6,0);
\draw(5.25,0)node[fill=white]{{\large $a,b$}};
\end{tikzpicture}\par}
\caption{A two multi-agent model in which a set of possible worlds is $\{w_1,w_2,w_3\}$. Propositions inside a possible world are true propositions in the world. An arrow represents an accessibility relation. Reflexive arrows are omitted. Agent $a$ is aware of $p$ and $q$, but agent $b$ is only aware of $p$. Rounded rectangles that contain a proposition refer to that $b$ is aware of the proposition. A rounded rectangle that contains worlds refers to $b$ does not distinguish the worlds.}
\end{figure}

The second one is a state of each agent's awareness in the other agent's mind. We consider the meaning of nested explicit knowledge and take $E_a E_b p$ for the example.``Agent $a$ explicitly knows that $b$ explicitly knows $p$'' means ``for all possible worlds accessible for $a$, for all possible worlds accessible for $b$, $p$ is ture.'' Their awareness limits agents' reasoning abilities. In $\mathrm{EL}$, it is assumed that both agents have the same states of awareness. However, there is also a case where $a$ does not have a full grasp of $b$'s state of awareness. For explicit knowledge in such a situation, it is necessary to introduce $b$'s state of awareness in $a$'s mind in this sentence where $a$ is the subject. This paper interprets agent $b$'s state of awareness in $a$'s mind as what $a$ believes $b$ is aware of, and this state of awareness can differ from the actual $b$'s state of awareness. To formalize such a nested explicit knowledge, we should introduce a new notation $E_{(a,b)}$ to represent $b$'s explicit knowledge based on $a$'s belief in $b$'s state of awareness. This knowledge is different from $b$'s explicit knowledge $E_b$.

The question arises whether such a knowledge $E_{(a,b)}$, $b$'s explicit knowledge based on $a$'s belief in $b$'s state of awareness, is a mere $a$'s belief. The answer is in the negative. It is because this information has a property of knowledge in $\mathrm{EL}$. For example, $E_{(a)} E_{(a,b)}\varphi \to E_{(a,b)}\varphi$ holds in our logic. This is called $\mathrm{T}$-axiom in modal logic, which is one of the features of knowledge. Thus, such information should be interpreted as knowledge, not belief.
The following is a concrete example of nested explicit knowledge. We use the example in \cite{yudai2022-1} as a reference.

\begin{exa}
  Let agent $a$ be the owner of the store $A$ and agent $b$ be the owner of the store $B$ considering opening their new store in an area. Product costs have risen due to poor harvests, and a reckless expansion leads to a significant loss. Only $a$ is aware of a new wholesaler that supplies the product at half the current price. She believes that $b$ is also aware of the wholesaler as well as herself and believes that $b$ believes she is not aware of the wholesaler. Agent $b$ is not aware of the wholesaler in actuality.
\end{exa}
Let $p_{a/b}$ denote ``$a/b$ opens a new stores'' and $q$ denote ``there is the wholesaler.'' In this setting, both owners explicitly know their own decisions. Agent $a$ actually knows that if $q$ is true, then $p_a$ and $p_b$ should be true. It is because she is aware of the wholesaler, and if such a wholesaler exists in actuality, they should expand their trade area. However, it is not $b$'s explicit knowledge because $b$ does not have any clue to know about $a$'s decision ($p_a$ or $\neg p_a$) and does not actually know $a$'s decision. Moreover, if $a$ actually knows that $q$ is true, imaginary $b$ with the same awareness as $a$ explicitly knows $a$'s decision. The same idea applies to $a$'s belief in $b$'s belief in $a$'s state of awareness and $b$'s belief in $a$'s state of awareness. Each owner's knowledge can be organized as follows.
\begin{itemize}
  \item[$\bullet$] 
  $E_{(a)} p_{a}$ and $E_{(b)} p_{b}$: $a$ and $b$ explicitly know their own decision,
  \item[$\bullet$] 
  $E_{(a)} q \to E_{(a)} p_b$: if $a$ explicitly knows that $q$ is true,  
  $a$ explicitly knows $b$'s decision,
  \item[$\bullet$] 
  $\neg E_{(a)} q \to \neg E_{(a)} p_b$: if $a$ does not explicitly know 
  that $q$ is true, $a$ does not explicitly know $b$'s decision,
  \item[$\bullet$] 
  $\neg E_{(b)} p_a$: in any case, $b$ does not explicitly know $a$'s decision,
  \item[$\bullet$] 
  $E_{(a)} q\to E_{(a)} E_{(a,b)} p_a$: if $a$ explicitly knows that $q$ is true,  $a$ explicitly knows that $b$ explicitly knows $a$'s decision,
  \item[$\bullet$] 
  $E_{(b)}\neg E_{(b,a)} p_b$: in any case, 
  $b$ explicitly knows that $a$ does not explicitly know $b$'s decision,
  \item[$\bullet$] 
  $\neg E_{(a,b,a)} p_b$: in any case, imaginary $a$ with the same awareness as $b$ does not explicitly know $a$'s decision,
  \item[$\bullet$] 
  $E_{(a)} E_{(a,b)} \neg E_{(a,b,a)} p_b$: in any case, $a$ explicitly knows that $b$ explicitly knows that $a$ explicitly knows $a$'s decision.
\end{itemize}

The previous logic cannot represent some explicit knowledge that is seemingly contradictory but is both true, such as $E_{(a,b)} p_a$ and $\neg E_{(b)} p_a$. Our logic can represent such knowledge, and we show it in a later section.

Such knowledge is complicated but valuable when it comes to describing and analyzing a game in game theory. The concept of awareness has been paid attention to analyzing human behaviors in game theory. There are several formalizations and solution concepts of a game where players have different states of awareness, also known as \textit{game with incomplete awareness/unawareness} \cite{feinberg2005games}. In a two-player game with incomplete awareness, a player $a$ reasons another player $b$'s reasoning under the conditions that $b$ may not be aware of all her possible actions, and $b$ may perceive $a$ is not aware of $b$'s possible actions, and so on. The example provided in \cite{feinberg2005games} shows that another player's awareness that a player actually uses on her reasoning affects selecting a strategy. A game that a player perceives she plays may differ by her state of awareness, and Nash equilibrium may also differ. Another player's choice changes according to another player's awareness that a player actually uses on her reasoning.

Models that can distinguish between \textit{awareness of} and \textit{awareness that} have been proposed in the field of philosophy\cite{fernandez2021awareness,grossi2015syntactic}. The models allow us to represent the relation between the concept of knowledge and awareness more accurately. The former, which is ``awareness of'', expresses the notion of awareness in the sense of being able to refer to the information. The latter is the notion of awareness in the sense of acknowledging that the information is true through reasoning or observation. Although both notions are similar, they have different properties. Even if an agent can refer to an apple on a desk, an agent does not necessarily acknowledge that an apple is actually on the desk. In previously proposed logic, explicit knowledge was defined by combining these two notions. In this paper, we focus on ``awareness of.'' This is because the notion is more relevant to the example, and we do not need to consider the other one.

\paragraph*{Overview}
The paper is structured as follows. In Section 2, we define \textit{Awareness Logic with Partitions and Chains} ($\mathcal{ALPC}$). Its semantics are given in the Kripke-style. The semantics has the standard Kriple model for knowledge, and we add an extended awareness set and another equivalence relation. The relation changes depending on an agent's belief in a state of another agent's awareness. The notion of the relation was proposed in \cite{yudai2022-1}. Besides, we show how our logic works with the example introduced in Section 1. In Section 3, we give the proof system \textbf{ALPC} in Hilbert style. For proving the completeness theorem, we use techniques of logic of the modality for transitive closure \cite{van2007dynamic} and the idea of a generated model \cite{chellas1980modal}. In Section 4, we introduce several related work to our logic. Lastly, we discuss possible future directions for extending and applying our logic and conclude.

\section{Awareness Logic with Partitions and Chains}
This section defines our logic \textit{Awareness Logic with Partitions and Chains} ($\mathcal{ALPC}$) and demonstrates to represent the knowledge of the example in Section 1. The logic is based on Awareness Logic \cite{fagin1988belief} and designed to incorporate the two ideas mentioned in Section 1, which are the effect of an agent's awareness on her own reasoning
and a state of each agent's awareness in the other agent's mind. For the former, we extend the equivalence relation in \cite{yudai2022-1}. As for the latter, we introduce a sequence of agents and extend an awareness set in \cite{fagin1988belief}.

\subsection{Language}
First, we define what represents the second idea, which is a state of each agent's awareness in the other agent's mind. We call it a \textit{chain of belief for awareness}. This chain is defined as a non-empty finite sequence of agents $(i_1,\dots,i_n)$. The length of the chain $(i_1,\dots,i_n)$ is $n$. Moreover, an order $\preceq$ on a set of chains $\Theta$ is defined as a partial order satisfying the following conditions: 
\begin{itemize}
  \item[$\bullet$] $\theta \preceq \theta'$ for 
  $\theta' = \theta\cdot\theta''$ or $\theta = \theta'$; 
  \item[$\bullet$] $\theta \preceq \theta'$ and 
  $\theta' \preceq \theta$ 
  for $\theta = (i_1,\dots,i_k,i_{k+1},\dots i_n)$ 
  such that $i_k = i_{k+1}$, and 
  $\theta' =(i_1,\dots,i_k,\dots i_n)$,
\end{itemize} 
where $\theta,\theta',\theta''\in \Theta$, and $\cdot$ refers to a concatenation of sequences. The condition of $\theta \preceq \theta'$ and $\theta' \preceq \theta$ is denoted as $\theta \simeq \theta'$. An order $\prec$ is defined in the usual way: $\theta\prec\theta'$ \textit{ iff }$\theta\preceq\theta'$ and $\theta'\not\preceq\theta$. 
The second condition means that if the same agent sequentially occurs in a chain $\theta$, it is said to be symmetric to the chain in which one of the agents is removed $\theta'$. For instance, if $\theta=(a,b,c)$ and $\theta'=(a,b)$, then $\theta'\prec \theta$; if $\theta=(a,b,b)$ and $\theta'=(a,b)$, then $\theta \simeq \theta'$, even though $\theta\not=\theta'$.

Next, we define the language. Let $\mathcal{P}$ be a countable set of atomic propositions, $\mathcal{G}$ be a finite set of agents, and $\Theta$ be a finite set of chains of belief for awareness.
The language $\mathcal{L}_{(\mathcal{P,G},\Theta)}$ is the set of formulas generated by the following grammar: 
  \begin{align*}
    \mathcal{L}_{(\mathcal{P,G},\Theta)} \ni\varphi
    ::=  p &\mid \neg\varphi \mid \varphi\wedge\varphi\mid 
     I_i\varphi\mid [\mathop{\approx}]_{\theta}\varphi 
     \mid C_{\theta}\varphi \mid 
     \triangle_{\theta} \varphi,
  \end{align*}
where $p \in \mathcal{P}$, $i\in\mathcal{G}$, $\theta\in \Theta$,  $\triangle\in\{A,K\}$, and for $\triangle_{\theta}$, if the form of $\triangle_{\theta'} \psi$ occurs in $\varphi$, then $\theta\preceq \theta'$. Other logical connectives $\vee$, $\to$, and $\leftrightarrow$ are defined in the usual manner. $A_{\theta}$, $I_i$, and $E_{\theta}$ are called an awareness operator, an implicit knowledge operator, and an explicit knowledge operator respectively. Let $\theta$ be $(i_1,\dots,i_n)$. Notationally, 
\begin{itemize}
  \item[$\bullet$] $A_{(i_1,\dots,i_n)}\varphi$ reads   
  ``$i_1$ believes $\cdots$ $i_{n-1}$ believes that $i_n$ is aware 
  of $\varphi$.''
  \item[$\bullet$] $I_i\varphi$ reads ``$i$ knows $\varphi$ implicitly.''
  \item[$\bullet$] $E_{(i_1,\dots,i_n)}\varphi$ reads ``imaginary $i_n$ with the awareness that $i_1$ believes $\cdots$ $i_{(n-1)}$ believes that $i_n$ has explicitly knows $\varphi$.''    
\end{itemize}
The operators $[\approx]_{\theta}$ and $C_{\theta}$ are special operators. These are used to define explicit knowledge and used in proofs in a later section. The operator $[\approx]_{(i_1,\dots,i_n)}$ can be interpreted as an operator that refers to true information in all the worlds where imaginary $i_n$ with the awareness that $i_1$ believes $\cdots$ $i_{(n-1)}$ believes that $i_n$ has does not distinguish. The operator $C_{(i_1,\dots,i_n)}$ can be interpreted as an operator that refers to a kind of $i_n$'s implicit knowledge. As the difference from $I_i$ operator, $I_i$ refers to $i$'s knowledge that $i$ can obtain in the case where $i$ is aware of all the propositions. On the other hand, $C_{(i_1,\dots,i_n)}$ refers to imaginary $i_n$'s knowledge in the case where she is only aware of what $i_1$ believes $\cdots$ $i_{(n-1)}$ believes that she has. Besides, ``the awareness that $i_1$ believes $\cdots$ $i_{(n-1)}$ believes that $i_n$ has'' is simpliy denoted by ``with the awareness of $\theta$.''

We take up some formulas to understand the language and the meanings of operators. The formula $E_{(i)} p$ is an element of the language and means ``$i$ explicitly knows $p$.'' The formula $E_{(i)} E_{(i,j)} p$ is also an element and meams ``$i$ explicitly knows that imaginary $j$ with the awareness of $(i)$ explicitly knows $p$,'' which is ``$i$ explicitly knows that $j$ explicitly knows $p$.'' By contrast, $E_{(i)} E_{(j,k)} p$ is not an element of the language. It means ``$i$ explicitly knows that imaginary $k$ with the awareness of $(j)$ explicitly knows $p$.'' Although what follows the first ``that'' should unfold in $i$'s mind, it unfolds only in $j$'s mind. Our syntax is designed to remove such strange propositions.

\subsection{Semantics}
We move on to define the semantics. The semantics are given in the Kripke-style. It has the standard Kriple model for knowledge and is based on the awareness structure \cite{fagin1988belief}. We extend an awareness set and the equivalence relation in \cite{yudai2022-1} with a chain of belief for awareness and incorporate to a model.
\begin{df}
  An \textit{epistemic model with awareness} $M$ is a tuple $\langle W, \{\mathop{\sim_i}\}_{i\in\mathcal{G}}, V,\{\mathscr{A}_{\theta}\}_{\theta\in\Theta}\rangle$, where:
     \begin{itemize}
      \item[$\bullet$] $W \text{ is a non-empty set of possible worlds}$;
      \item[$\bullet$] $\mathop{\sim_i}
      \text{ is an equivalence relation on }W$;
      \item[$\bullet$] $V : \mathcal{P} \to 2^{W}$;
      \item[$\bullet$] $\mathscr{A}_{\theta}\text{ is a non-empty set 
      of atomic propositions satisfying the condition:}$
      \begin{itemize}
        \item If $\theta'\preceq \theta$, 
        then $\mathscr{A}_{\theta} \subseteq 
        \mathscr{A}_{\theta'}$ for every $\theta'\in \Theta$.
      \end{itemize}    
     \end{itemize}
   \end{df}
\noindent
The part of $\langle W, \{\mathop{\sim_i}\}_{i\in\mathcal{G}}, V\rangle$ is a Kripke model that corresponding to $\mathbf{S5}$. A set $\mathscr{A}_{\theta}$ is called an awareness set of $\theta$. Let $\theta$ be $(i_1,\dots,i_n)$, and $\mathscr{A}_{(i_1,\dots,i_n)}$ represents a set of propositions of which  $i_1$ believes $\cdots$ $i_{(n-1)}$ believes that the last member $i_n$ is aware. The condition in the definition of an awareness set means that if a chain $\theta$ is greater than equal to another chain $\theta'$, then the awareness set of $\theta'$ is included in the awareness set of $\theta$. This reflects an intuition that $i$ should be aware of all the propositions of which $i$ believes that $j$ is aware or imaginary $j$ in $i$'s mind is aware.
As for the pair $\mathscr{A}_{(i,i)}$ and $\mathscr{A}_{(i\cdot\theta\cdot j)}$, we may need to supplementary explain. In this case, $\mathscr{A}_{(i\cdot\theta\cdot j)}\subseteq\mathscr{A}_{(i,i)}$ holds because $(i)\simeq (i,i) \prec (i\cdot\theta\cdot j)$ by the definition of the order. In the case of a symmetric order between $\theta$ and $\theta'$, the two awareness sets of $\theta$ or $\theta'$ are identical. The pair $(M,w)$ with $M$ and $w\in W$ is called a pointed model.



We define another equivalence relation to formalize the effect of an agent's awareness on her own reasoning. The relation is given depending on an epistemic model with awareness.  
\begin{df}
  For each $\theta$, a relation $\mathop{\approx_{\theta}}$ on $W$ 
  is defined by $(w,v) \in \mathop{\approx_{\theta}} \textit{\  iff,  \ } 
  w \in V(p) \textit{\ iff \ } v \in V(p) \text{ for every } p \in \mathscr{A}_{\theta}$.
\end{df}
\noindent
The relation $\mathop{\approx_{\theta}}$ is called the \textit{indistinguishable relation} of $\theta$. The relation occurs between possible worlds with different valuations only for unaware propositions. It means the relation appears or disappears in conjunction with an awareness set. It is one of the properties between the relation and an awareness set that $\mathscr{A}_{\theta}\subseteq \mathscr{A}_{\theta'}$ implies $\mathop{\approx_{\theta'}}\subseteq \mathop{\approx_{\theta}}$. The last member with the awareness of $\theta$ cannot distinguish such possible worlds because of her lack of awareness. This relation is an equivalence relation and constructs partitions on $W$. Possible worlds collapsed by the relation as one equivalence class can be interpreted as one world for the last member with the awareness of $\theta$. 

Next, we move on to define the satisfaction relation. Before the definition, we introduce some notations: a set $At(\varphi)$ denotes the set of atomic propositions that occur in $\varphi$; a relation $\mathop{\sim_i} \circ\mathop{\approx_{\theta}}$ denotes a sequential composition of $\approx_{\theta}$ and $\mathop{\sim_i}$; for a binary relation $R$, a relation $R^+$ denotes the transitive closure of $R$. The closure $R^+$ is the smallest set such that $R\subseteq R^+$, and for all $x, y, z$, if $(x,y)\in R^+$ and $(y,z)\in R^+$, then $(x,z)\in R^+$.
\begin{df}
  For any epistemic models with awareness $M$ and possible worlds $w \in W$, a satisfaction relation $\vDash$ is given as follows: 
  \begin{align*}
    M,w \vDash p &\textit{\  iff  \ } w \in V(p) ;
    \\[-3pt] M,w \vDash \neg \varphi &\textit{\  iff  \ } M,w 
     \nvDash\varphi;\\[-3pt]
     M,w \vDash \varphi\wedge\psi &\textit{\  iff  \ } 
     M,w\vDash\varphi \text{, and } M,w\vDash\psi ; 
    \\[-3pt] M,w \vDash A_{\theta} 
     \varphi &\textit{\  iff  \ } \mathrm{At}(\varphi) 
     \subseteq \mathscr{A}_{\theta};\\[-3pt]
    M,w \vDash I_i\varphi &\textit{\  iff  \ } 
    M,v\vDash \varphi \text{ for all } v \text{ such that }
    (w,v)\in \mathop{\sim_i};\\[-3pt]
    M,w\vDash [\approx]_{\theta}\varphi &\textit{\  iff  \ } 
    M,v\vDash\varphi \text{ for all } v \text{ such that }
    (w,v)\in \mathop{\approx_{\theta}};\\[-3pt]
    M,w\vDash C_{\theta}\varphi &\textit{\  iff  \ } 
    M,v\vDash \varphi \text{ for all } v
    \text{ such that } (w,v) \in (\mathop{\sim_i} \circ\mathop{\approx_{\theta}})^+
    ;\\[-3pt]
     M,w \vDash E_{\theta} \varphi &\textit{\  iff  \ } 
     M,w \vDash A_{\theta}\varphi  \text{, and }
     M,w\vDash C_{\theta}\varphi,     
  \end{align*}
\end{df}
\noindent
From Definition 1, if both an indistinguishable relation $\mathop{\approx_{\theta}}$ and an accessibility relation $\mathop{\sim_i}$ is an equivalence relation, $( \mathop{\sim_i} \circ\mathop{\approx_{\theta}})^+$ is also an equivalence relation. This is because the reverse direction on the composition is also reachable, although it consumes a few extra steps. Thus, $( \mathop{\sim_i} \circ\mathop{\approx_{\theta}})^+$ gives a new partition of possible worlds. This partition corresponds to imaginary $i$'s reasoning limited by the awareness of $\theta$. We can formalize knowledge taking into account the propositions of which the agent is aware. We can also find that $[\approx]_{\theta} I_i\varphi$ corresponds to $\mathop{\sim_i} \circ\mathop{\approx_{\theta}}$. In contrast, this relation is not equivalent, unlike its transitive closure.

The validity is defined in the usual way.  
\begin{df}
  A formula $\varphi$ is valid at $M$, 
  if $\varphi$ holds at every pointed model $M,w$ in $M$, 
  which is denoted by $M\vDash\varphi$. 
  A formula $\varphi$ is valid
  if $\varphi$ holds at every pointed model $M,w$, 
  which is denoted by $\vDash \varphi$. 
\end{df}

Note that there are local and global definitions of an awareness set. The former defines an awareness function that takes a possible world as an argument and changes elements of an awareness set for each possible world. The latter defines an awareness set as the same in all possible worlds. Generally, a state of an agent's awareness should be fixed within a set of possible worlds that she considers. The global definition is used in the logic that does not consider the outside of a specific agent's accessible worlds, such as a single-agent case. On the other hand, a local definition is used to represent possible worlds outside of an agent's accessible worlds. It is possible to express that a state of an agent's awareness differs between the agent and another agent.

Our logic adopts the global one. By incorporating the notion of a chain of belief for awareness, it is possible to represent the advantage of the local definition: to express a state of an agent's awareness that another agent thinks may differ from that of the agent's actual awareness. This is because a state of an agent's awareness is separately set for each agent. However, another feature of the local one is not represented in our logic. A situation where an agent does not have any belief at all may be interesting. Knowledge in such a situation is only for the local one. In order to consider such a situation, we can modify the definition of an awareness set into an awareness function in our logic.

\subsection{Properties}
In this section, we discuss some properties of our logic.

As our logic treats ``awareness of,'' the property of ``awareness of'' is satisfied \cite{van2015handbook}: if an agent is aware of atomic propositions, she is aware of more complex formulas produced by the atomic propositions. This means if an agent is aware of $\varphi$, that is, she can refer to $\varphi$, then she is also aware of $\neg\varphi$. 
\begin{itemize}
  \item[$\bullet$] $\vDash A_{\theta}\varphi \leftrightarrow 
  A_{\theta}\neg\varphi$, 
  \item[$\bullet$] $\vDash A_{\theta}(\varphi\wedge\psi) \leftrightarrow 
  A_{\theta}\varphi \wedge A_{\theta}\psi$, 
  \item[$\bullet$] $\vDash A_{\theta}\varphi \leftrightarrow A_{\theta} 
  A_{\theta'}\varphi$,
  \item[$\bullet$] $\vDash A_{\theta}\varphi \leftrightarrow A_{\theta} 
  [\approx]_{\theta'}\varphi$,
  \item[$\bullet$] $\vDash A_{\theta}\varphi \leftrightarrow 
  A_{\theta} C_{\theta'}\varphi$,
  \item[$\bullet$] $\vDash A_{\theta} \varphi\leftrightarrow A_{\theta} 
  I_i \varphi$,
  \item[$\bullet$] $\vDash A_{\theta} \varphi \leftrightarrow 
  A_{\theta} K_{\theta'} \varphi$.
\end{itemize}

Because of the definition of an awareness set, when a sequential iteration of an agent occurs in a chain, the awareness set of the chain is the same as the awareness set of the chain where the iterated part is removed. 
\begin{itemize}
  \item[$\bullet$] 
  $\vDash A_{(i,j,j)} \varphi 
  \leftrightarrow A_{(i,j)} \varphi$.
  \item[$\bullet$] 
  $\vDash [\approx]_{(i,j,j)} \varphi 
  \leftrightarrow [\approx]_{(i,j)} \varphi$.
  \item[$\bullet$] 
  $\vDash E_{(i,j,j)} \varphi 
  \leftrightarrow E_{(i,j)} \varphi$.
\end{itemize}
The third formula means that explicit knowledge of $j$ with the awareness that $i$ believes $j$ believes that $j$ has equals to explicit knowledge of $j$ with the awareness that $i$ believes that $j$ has.


We can also say: 
\begin{itemize}
  \item[$\bullet$] 
  $\vDash A_{(i,j,k)}\varphi \rightarrow A_{(i,j)} \varphi$,
  \item[$\bullet$] 
  $\vDash [\approx]_{(i,j,k)} \varphi \to [\approx]_{(i,j)}\varphi$,
  \item[$\bullet$] 
  $\nvDash E_{(i,j,k)}\varphi \rightarrow E_{(i,j)} \varphi$.
\end{itemize}
The first and second expressions are natural. This is because the propositions of which $i$ believes $j$ believes that $k$ is aware necessarily are those of which $i$ believes that $j$ is aware.
In contrast, the third formula is not valid. Accessibility relations do not change in conjunction with awareness sets. Even if the indistinguishable relation of $(i,j,k)$ includes indistinguishable of $(i,j)$, the formula can be false at some worlds accessible on $j$'s accessibility relation from the evaluate world.

In logic of awareness, it is regarded that the negative introspection for unawareness holds, which means an agent never knows any proposition of which the agent is not aware. Our logic has the property because explicit knowledge of a proposition requires awareness of the proposition.
\begin{itemize}
  \item[$\bullet$] $\vDash \neg E_{\theta} \neg A_{\theta}\varphi$.
\end{itemize}
As for implicit knowledge, the formula $L_{\theta} \neg A_{\theta}\varphi$ may be true.

To understand nested knowledge in our logic, we also give some formulas.
\begin{itemize}
  \item[$\bullet$] 
  $\nvDash E_{(i,j)}\varphi 
  \rightarrow E_{(i)} E_{(i,j)}\varphi$,
  \item[$\bullet$] 
  $\vDash E_{(i)} E_{(i,j)}\varphi 
  \rightarrow E_{(i,j)}\varphi$.
\end{itemize}
This means that $i$'s explicit knowledge that imaginary $j$ with the awareness of $(i,j)$ explicitly knows $\varphi$ knows coincides with the explicit knowledge of imaginary $j$ with the awareness that $i$ believes that $j$ has. Our logic distinguishes the sentence that $i$ explicitly knows that $j$ explicitly knows $\varphi$ from the sentence that $j$ with the awareness that $i$ believes $j$ has explicitly knows $\varphi$.

As stated in Section 1, our explicit knowledge operator has properties of knowledge in $\mathrm{EL}$. 
\begin{itemize}
  \item[$\bullet$] 
  $\vDash E_{(i)} E_{(i,j)}\varphi \rightarrow E_{(i,j)}\varphi$,
  \item[$\bullet$] 
  $\nvDash E_{(i)} E_{(i,j)}\varphi \rightarrow E_{(j)}\varphi$,
  \item[$\bullet$] 
  $\vDash \neg E_{(i)}\varphi \rightarrow \neg E_{(i,j,i)} \varphi$.
\end{itemize}
The first formula is $\mathrm{T}$ axiom in modal logic, which is a property of knowledge in $\mathrm{EL}$. The second formula does not hold because if $i$ explicitly knows that $j$ explicitly knows $\varphi$, that does not mean that $j$ herself explicitly knows $\varphi$. However, that means that imaginary $j$ with the awareness in $i$'s belief in $j$'s belief explicitly knows $\varphi$ as the first expression shows. The third formula means if $a$ does not explicitly know $p$, then an imaginary $a$ with the state of $a$'s awareness in $b$'s mind in $a$'s mind does not explicitly know $p$. The set of imaginary the state of $a$'s awareness in $b$'s mind in $a$'s mind should be smaller than the set of what $a$ herself is aware of. 

As for implicit knowledge, it has the same properties as the necessity operator in modal logic $\mathbf{S5}$, such as $I_i\varphi\to \varphi$, $I_i\varphi\to I_i I_i \varphi$, and $\neg I_i\varphi\to I_i \neg I_i \varphi$.
The relation between $I_i$ operator and $C_{\theta}$ operator stated in syntax is represented as $C_{\theta}\varphi \to I_i\varphi$. 

\subsection{Formalization of The Example}
This subsection gives the formalization of the example introduced in Section 1 with our logic. Let $\Theta$ be $\{(a),(b),(a,b),(b,a),(a,b,a)\}$. In the example, only $a$ is aware of all the atomic propositions. Formally, we write $\mathscr{A}_{(a)}=\{p_a,p_b,q\}$, and $\mathscr{A}_{(b)}=\{p_a,p_b\}$. Owner $a$ believes that $b$ is also aware of the wholesaler, which is formalized as $\mathscr{A}_{(a,b)}=\{p_a,p_b,q\}$. By contrast, $b$ believes that $a$ is not aware of the wholesaler as well as herself, which is formalized as $\mathscr{A}_{(b,a)}=\{p_a,p_b\}$. Finally, $a$ believes $b$ believes that $a$ is not aware of the wholesaler, which is formalized as $\mathscr{A}_{(a,b,a)}=\{p_a,p_b\}$. Then, there are indistinguishable relations $\mathop{\approx_{(b)}}$, $\mathop{\approx_{(b,a)}}$, and $\mathop{\approx_{(a,b,a)}}$ between possible worlds with different valuation only for $q$. Formally, $\mathop{\approx_{(b)}} =\mathop{\approx_{(b,a)}} = \mathop{\approx_{(a,b,a)}} = \{(w_1,w_2),(w_2,w_1),(w_1,w_1),\dots,(w_5,w_5)\}$ in Figure 2.  

In Figure 2, the equivalence classes of the indistinguishable relation $\mathop{\approx_{(a,b,a)}}$, which is the same as $\mathop{\approx_{(b)}}$ and $\mathop{\approx_{(b,a)}}$, are represented by the light grey background. By interpreting the equivalence class as one possible world for $a$ with the awareness of $(a,b,a)$, the figure represents the same graph as the Kripke model that is only for $p_a$ and $p_b$ (Figure 3), in terms of possible worlds and accessibility relations.

In the model $M$ in Figure 2, all the formulas are consistent with each knowledge stated in Section 1. Among those, we focus on $a$'s nested knowledge here, which is $1: E_{(a)} q\to E_{(a)} p_b$, and $2: E_{(a)} E_{(a,b)} \neg E_{(a,b,a)} p_b$. This knowledge is seemingly contradictory, but they correctly represent the knowledge of agents with a different awareness.
\begin{itemize}
  \item[$\bullet$] $M\vDash E_{(a)} q\to E_{(a)} p_b$ 
  \item[$\bullet$] $M\vDash E_{(a)} E_{(a,b)} \neg E_{(a,b,a)} p_b$
\end{itemize}
As for $1$, it is vacuously true in the other worlds than $w_1$. Since $p_b\in\mathcal{A}_(a)$, $w_1$ is only world reachable on $\sim_{(a)}$ and $\approx_{(a)}$ from $w_1$, and $p_b$ is true in $w_1$, then $E_{(a)} q\to E_{(a)} p_b$ is true for all worlds. As for $2$, the formula in the other worlds than $w_1$ requires that $p_b$ is true in all the worlds because all the worlds are reachable on three relations $\sim_a$, $\sim_b$, $\approx_{(a,b,a)}$. Moreover, $w_3$, where $p_b$ is false, is reachable on $\approx_{(a,b,a)}$ from $w_1$. Thus, $E_{(a)} E_{(a,b)} \neg E_{(a,b,a)} p_b$ is true for all worlds.

Our logic $\mathcal{ALPC}$ correctly represented both owners' explicit knowledge including nested one in the example. We can say that this logic is one of formalizations to represent and analyze nested explicit knowledge.

\begin{figure}[h]
  \begin{tabular}{cc}
  \begin{minipage}{0.5\hsize}
  \centering
  \scalebox{0.9}[0.9]{
  \begin{tikzpicture}
  \fill[lightgray](-1.25,-1.25)rectangle(4.25,1.25);
  \fill[lightgray](4.75,-1.25)rectangle(7.25,1.25);
  \fill[lightgray](1.75,-1.75)rectangle(4.25,-4.25);
  \fill[lightgray](4.75,-1.75)rectangle(7.25,-4.25);
  \draw(0,0)circle(1);
  \draw(0.95,-0.95)node{{\large $w_1$}};
  \draw(3,0)circle(1);
  \draw(3.95,-0.95)node{{\large $w_2$}};
  \draw(6,0)circle(1);
  \draw(6.95,-0.95)node{{\large $w_3$}};
  \draw(3,-3)circle(1);
  \draw(3.95,-3.95)node{{\large $w_4$}};
  \draw(6,-3)circle(1);
  \draw(6.95,-3.95)node{{\large $w_5$}};
  \draw(0,0.1)node{{\Large $p_a $ $p_b$}};
  \draw(0,-0.05)node[below]{{\Large $ q$}};
  \draw(3,0.1)node{{\Large $p_a $ $p_b$}};
  \draw(3,-0.05)node[below]{{\Large $\neg q$}};
  \draw(6,0.1)node{{\Large $p_a $ $\neg p_b$}};
  \draw(6,-0.05)node[below]{{\Large $\neg q$}};
  \draw(3,-2.9)node{{\Large $\neg p_a$ $ p_b$}};
  \draw(3,-3.05)node[below]{{\Large $\neg q$}};
  \draw(6,-2.9)node{{\Large $\neg p_a $ $\neg p_b$}};
  \draw(6,-3.05)node[below]{{\Large $\neg q$}};
  \draw[thick, <->](4,0)--(5,0);
  \draw(4.5,0)node[fill=white]{{\large $a$}};
  \draw[thick, <->](4,-3)--(5,-3);
  \draw(4.5,-3)node[fill=white]{{\large $a$}};
  \draw[thick, <->](3,-1)--(3,-2);
  \draw(3,-1.5)node[fill=white]{{\large $b$}};
  \draw[thick, <->](6,-1)--(6,-2);
  \draw(6,-1.5)node[fill=white]{{\large $b$}};
  \end{tikzpicture}\par}
  \caption{A two multi-agent model where a set of possible worlds is $\{w_1,w_2,w_3,w_4,w_5\}$. Reflexive arrows are omitted. Grey backgrounds refer to equivalence classes of $\mathop{\approx_{(a,b,a)}}$.}
  \end{minipage}
  \begin{minipage}{0.5\hsize}
    \centering
    \scalebox{0.9}[0.9]{
    \begin{tikzpicture}
    \draw(3,0)circle(1);
    \draw(3.95,-0.95)node{{\large $w_1'$}};
    \draw(6,0)circle(1);
    \draw(6.95,-0.95)node{{\large $w_2'$}};
    \draw(3,-3)circle(1);
    \draw(3.95,-3.95)node{{\large $w_3'$}};
    \draw(6,-3)circle(1);
    \draw(6.95,-3.95)node{{\large $w_4'$}};
    \draw(3,0)node{{\Large $p_a $ $p_b$}};
    \draw(6,0)node{{\Large $p_a $ $\neg p_b$}};
    \draw(3,-3)node{{\Large $\neg p_a$ $ p_b$}};
    \draw(6,-3)node{{\Large $\neg p_a $ $\neg p_b$}};
    \draw[thick, <->](4,0)--(5,0);
    \draw(4.5,0)node[fill=white]{{\large $a$}};
    \draw[thick, <->](4,-3)--(5,-3);
    \draw(4.5,-3)node[fill=white]{{\large $a$}};
    \draw[thick, <->](3,-1)--(3,-2);
    \draw(3,-1.5)node[fill=white]{{\large $b$}};
    \draw[thick, <->](6,-1)--(6,-2);
    \draw(6,-1.5)node[fill=white]{{\large $b$}};
    \end{tikzpicture}\par}
    \caption{A model only concering $p_a$ and $p_b$}
  \end{minipage}
  \end{tabular}
  \end{figure}

\section{Hilbert System of $\mathbf{ALPC}$}
We move on to the proof theory for our logic. The Hilbert system of $\mathbf{ALPC}$ is given in Table 1. Axiom $\mathrm{AN}$, $\mathrm{ACN}$, $\mathrm{AA}$, $\mathrm{AL}$, $\mathrm{A[\approx]}$, $\mathrm{ACM}$, and $\mathrm{AK}$ mean that if an agent is aware of atomic propositions, the agent is aware of more complex formulas produced by the atomic propositions and correspond to the notion of ``awareness of.'' Axiom $\mathrm{AI}$ corresponds to the condition in the definition of an awareness set. Axiom $\mathrm{KA}$ and $\mathrm{NKA}$ reflect the global definition of an awareness set. For $\mathrm{K_L, T_L, 5_L, K_{[\approx]}, T_{[\approx]}},$ and $\mathrm{5_{[\approx]}}$, we adopt $\mathrm{K}$,$\mathrm{T}$, and $\mathrm{5}$ in modal logic. $\mathrm{5_L}$ called negative introspection in $\mathrm{EL}$ means that an agent always knows what the agent does not know. This axiom also characterizes logical omniscience. In our logic, as with most logics of awareness, this formula does not hold for an explicit knowledge operator. Instead, $\neg E_{\theta}\varphi\wedge A_{\theta}\neg E_{\theta}\varphi\to E_{\theta} \neg E_{\theta}\varphi$ is valid. $\mathrm{K_C}, \mathrm{IND}$, and $\mathrm{MIX}$ are based on axioms of logic with common knowledge because both logics use a transitive closure. $\mathrm{KAC}$ corresponds to the constitution of $E_{\theta}\varphi$ on the satisfaction relation, which is explicit knowledge is the knowledge that meets a kind of implicit knowledge represented by $C_{\theta}$ and an aware proposition. 

\begin{df}
  A system $\mathbf{ALPC}$ is a set of formulas that contains the axioms in Table 1 and is closed under inference rules in the table. We write $\vdash\varphi$ if $\varphi\in$ $\mathbf{ALPC}$. Let $\Gamma$ be a set of formulas in $\mathcal{L}_{(\mathcal{P,G},\Theta)}$ and $\bigwedge\Gamma$ be an abbreviation of $\bigwedge_{\varphi\in \Gamma}\varphi$. If there is a finite subset $\Gamma'$ of $\Gamma$ such that $\vdash\bigwedge\Gamma'\to \varphi$, we write $\Gamma\vdash\varphi$ and call it that $\varphi$ is deducible from $\Gamma$.
\end{df}

\begin{table}[h]
  \caption{Table 1: Axiom schemas and inference rules of \textbf{ALPC}}	
  \centering
   \begin{tabular}{|l|l|}
    \hline
    \multicolumn{2}{|c|}{Axiom schemata}\\\hline
    $\mathrm{TAUT}$ & The set of propositional tautologies\\
        $\mathrm{AN}$ & $A_{\theta}\varphi \leftrightarrow A_{\theta}\neg\varphi$\\
        $\mathrm{ACN}$ & $A_{\theta}(\varphi\wedge\psi) \leftrightarrow 
        A_{\theta}\varphi \wedge A_{\theta}\psi$\\
        $\mathrm{AA}$ & $A_{\theta}\varphi \leftrightarrow A_{\theta} 
        A^{\theta'}_j\varphi$\\
        $\mathrm{AL}$ & $A_{\theta} \varphi\leftrightarrow A_{\theta} 
        I_j \varphi$\\
        $\mathrm{A[\approx]}$ & 
        $A_{\theta}\varphi \leftrightarrow A_{\theta} 
        [\approx]^{\theta'}_j\varphi$\\
        $\mathrm{ACM}$ & $A_{\theta}\varphi \leftrightarrow 
        A_{\theta} C^{\theta'}_j\varphi$\\
        $\mathrm{AK}$ & $A_{\theta} \varphi \leftrightarrow 
        A_{\theta} K^{\theta'}_j \varphi$\\
        $\mathrm{AN[\approx]}$ & $A_{\theta} p\wedge p
         \to [\approx]_{\theta} p$\\
        $\mathrm{AI}$ & $A_{\theta} \varphi
        \to A_{\theta'} \varphi$\\
        $\mathrm{KA}$ & $A_{\theta} \varphi
        \to \bigwedge_{\theta'\in \Theta} C_{\theta'} A_{\theta} \varphi$\\
        $\mathrm{NKA}$ & $\neg A_{\theta} \varphi
        \to \bigwedge_{\theta'\in \Theta} C_{\theta'} \neg A_{\theta} \varphi$\\
        $\mathrm{K_L}$ & 
        $I_i(\varphi\to \psi)\to 
        (I_i\varphi \to I_i\psi)$\\
        $\mathrm{T_L}$ & $I_i \varphi \to \varphi$\\
        $\mathrm{5_L}$ & $\neg I_i\varphi
        \to I_i \neg I_i\varphi$\\
        $\mathrm{K_{[\approx]}}$ & 
        $[\approx]_{\theta}(\varphi\to \psi)\to 
        ([\approx]_{\theta}\varphi \to [\approx]_{\theta}\psi)$\\
        $\mathrm{T_{[\approx]}}$ & $[\approx]_{\theta} \varphi \to \varphi$\\
       $\mathrm{5_{[\approx]}}$ & $\neg [\approx]_{\theta}\varphi
        \to [\approx]_{\theta} \neg [\approx]_{\theta}\varphi$\\
        $\mathrm{K_{C}}$ & $C_{\theta}(\varphi\to \psi)\to 
        (C_{\theta}\varphi \to C_{\theta}\psi)$\\
        $\mathrm{MIX}$ & $C_{\theta}\varphi \to \varphi
        \wedge[\approx]_{\theta} I_i C_{\theta}\varphi$\\
        $\mathrm{IND}$ & 
        $C_{\theta}(\varphi\to [\approx]_{\theta} I_i\varphi
        )\to
        (\varphi \to C_{\theta}\varphi)$\\
        $\mathrm{KAC}$ & $E_{\theta}\varphi\leftrightarrow A_{\theta}\varphi\wedge 
        C_{\theta}\varphi$\\\hline
        \multicolumn{2}{|c|}{Inference Rules}\\\hline
        $\mathrm{MP}$ & If $\vdash \varphi$ and $\vdash \varphi\to\psi$, then 
        $\vdash \psi$\\
        $\mathrm{LG}$ & If $\vdash \varphi$ then $\vdash I_i\varphi$\\
        $\mathrm{[\approx] G}$ &
        If $\vdash \varphi$ then $\vdash [\approx]_{\theta}\varphi$\\
        $\mathrm{CG}$ & If $\vdash \varphi$ 
        then $\vdash C_{\theta}\varphi$\\\hline
  \end{tabular}
\end{table}

\subsection{Soundness}
We prove that all the theorems are valid, which is the soundness of the logic.  
\begin{thm}
  If $\vdash\varphi$, then $\vDash\varphi$.
  \end{thm}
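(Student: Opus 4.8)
The plan is to argue by induction on the length of a derivation of $\varphi$ in $\mathbf{ALPC}$, establishing the two ingredients that suffice for any Hilbert-style soundness proof: every axiom schema in Table 1 is valid, and every inference rule preserves validity. Once both are in hand, an arbitrary theorem is valid because it is obtained from valid axioms by validity-preserving steps.

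For the inference rules the verification is direct from the satisfaction clauses. $\mathrm{MP}$ preserves validity by the clause for $\to$. For the three necessitation rules $\mathrm{LG}$, $\mathrm{[\approx]G}$, and $\mathrm{CG}$: if $\varphi$ holds at every pointed model, then in particular it holds at every world related to a given $w$ by $\mathop{\sim_i}$, by $\mathop{\approx_\theta}$, or by $(\mathop{\sim_i}\circ\mathop{\approx_\theta})^+$ respectively, so $I_i\varphi$, $[\approx]_\theta\varphi$, and $C_\theta\varphi$ are valid.

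For the axioms I would group them by the machinery they use. The awareness axioms $\mathrm{AN}$, $\mathrm{ACN}$, $\mathrm{AA}$, $\mathrm{AL}$, $\mathrm{A[\approx]}$, $\mathrm{ACM}$, $\mathrm{AK}$ all reduce, via the clause $M,w\vDash A_\theta\varphi$ iff $\mathrm{At}(\varphi)\subseteq\mathscr{A}_\theta$, to elementary identities about the atom set: $\mathrm{At}(\neg\varphi)=\mathrm{At}(\varphi)$, $\mathrm{At}(\varphi\wedge\psi)=\mathrm{At}(\varphi)\cup\mathrm{At}(\psi)$, and $\mathrm{At}(\triangle_{\theta'}\varphi)=\mathrm{At}(\varphi)$ for each operator $\triangle$ (the subscripts are chains, not atoms). $\mathrm{AI}$ is exactly the monotonicity condition $\theta'\preceq\theta\Rightarrow\mathscr{A}_\theta\subseteq\mathscr{A}_{\theta'}$ from Definition 1. $\mathrm{KA}$ and $\mathrm{NKA}$ hold because whether $\mathrm{At}(\varphi)\subseteq\mathscr{A}_\theta$ is a world-independent (global) fact, so $A_\theta\varphi$ and $\neg A_\theta\varphi$ have constant truth value and therefore propagate under any $C_{\theta'}$. $\mathrm{AN[\approx]}$ is immediate from Definition 2: if $p\in\mathscr{A}_\theta$ and $w\in V(p)$, then every $\mathop{\approx_\theta}$-related world also lies in $V(p)$. $\mathrm{KAC}$ is nothing but the satisfaction clause for $E_\theta$. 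Finally the modal axioms $\mathrm{K_L}$, $\mathrm{T_L}$, $\mathrm{5_L}$ and $\mathrm{K_{[\approx]}}$, $\mathrm{T_{[\approx]}}$, $\mathrm{5_{[\approx]}}$ are the standard $\mathbf{S5}$ schemas, valid by the usual frame correspondence because both $\mathop{\sim_i}$ and $\mathop{\approx_\theta}$ are equivalence relations, while $\mathrm{K_C}$ holds because $C_\theta$ is a normal box for $(\mathop{\sim_i}\circ\mathop{\approx_\theta})^+$.

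The main obstacle is the pair of transitive-closure axioms $\mathrm{MIX}$ and $\mathrm{IND}$, where $i$ denotes the last member of $\theta$ and $R:=\mathop{\sim_i}\circ\mathop{\approx_\theta}$. Here I would first record two facts: $R$ is reflexive (both constituent relations are), hence $(w,w)\in R^+$, and $R^+$ is transitive. For $\mathrm{MIX}$, reflexivity yields the conjunct $\varphi$ at $w$, and for the conjunct $[\approx]_\theta I_i C_\theta\varphi$ one chases a $\mathop{\approx_\theta}$-step followed by a $\mathop{\sim_i}$-step followed by an $R^+$-path, which composes (using the order of the composition and transitivity of $R^+$) into a single $R^+$-path out of $w$, so $\varphi$ holds at its end by hypothesis. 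The genuinely non-routine axiom is $\mathrm{IND}$: assuming $C_\theta(\varphi\to[\approx]_\theta I_i\varphi)$ and $\varphi$ at $w$, I would prove $\varphi$ at every $R^+$-successor $v$ of $w$ by induction on the length of an $R$-path $w=w_0\,R\,w_1\,R\cdots R\,w_n=v$. At each step the hypothesis (evaluated at $w_k$, which is $R^+$-reachable from $w$, so the $C_\theta$-formula applies there) yields $[\approx]_\theta I_i\varphi$ at $w_k$ from $\varphi$ at $w_k$, and unwinding the single edge $w_k\,R\,w_{k+1}$ through its intermediate world transports $\varphi$ to $w_{k+1}$. This path-induction is the crux; the rest is bookkeeping, and it mirrors the soundness of the induction axiom in the common-knowledge and transitive-closure logics that the paper cites.
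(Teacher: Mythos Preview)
Your proposal is correct and follows essentially the same approach as the paper's own proof: induction on the derivation, with a case analysis verifying validity of each axiom schema (grouped by the relevant semantic machinery) and validity-preservation for each rule. Your treatment of $\mathrm{IND}$ via an explicit induction on the length of an $R$-path is slightly more detailed than the paper's sketch, but it is the same underlying argument.
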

  \begin{proof} 
    We prove it for any formulas by induction on the structure of $\mathbf{ALPC}$. First, we prove that all the axioms are valid. For $I_i$ and $[\approx]_{\theta}$, it is proven in a similar way in \textbf{S5} because both accessibility relations and indistinguishable relations are equivalence relations. For $A_{\theta}$, we show only $\mathrm{AN}$ and $\mathrm{AN[\approx]}$ here. Axioms $\mathrm{ACN}$, $\mathrm{AA}$, $\mathrm{AL}$, $\mathrm{A[\approx]}$, $\mathrm{ACM}$, and $\mathrm{AK}$ are proven in the same way as these two axioms. For $\mathrm{AI}$, $\mathrm{KA}$, and $\mathrm{NKA}$, it is proven quickly from the definition. For $\mathrm{KAC}$, the proof is trivial.
    \begin{itemize}
    \item For $\mathrm{AN}$, suppose that $M,w\vDash A_{\theta}\varphi$, then $\mathrm{At}(\varphi)\subseteq \mathscr{A}_{\theta}$.
    Since $\mathrm{At}(\varphi) = \mathrm{At}(\neg\varphi)$, $\mathrm{At}(\neg\varphi)\subseteq \mathscr{A}_{\theta}$.
    Thus, $M,w\vDash A_{\theta}\neg\varphi$.
    The reverse direction is proven similarly.
    \item For $\mathrm{AN[\approx]}$, suppose that $M,w\vDash A_{\theta} p 
    \wedge p$, then $p \in \mathscr{A}_{\theta}$ and $w\in V(p)$.
     We consider an arbitary $v$ such that for every atomic proposition $q\in \mathscr{A}_{\theta}$, $w\in V(q)$ if and only if $v\in V(q)$.
     From the assumptions, $M,v\vDash p$. Thus, $M,w\vDash [\approx]_{\theta} p$.
    \item For $\mathrm{K_C}$, suppose that $M,w\vDash  C_{\theta}(\varphi\to\psi)$ and $M,w\vDash C_{\theta}\varphi$. For all $v$ such that $(w,v)\in (\mathop{\sim_i}\circ\mathop{\approx_{\theta}})^+$, $M,v\vDash\varphi\to\psi$ and $M,v\vDash\varphi$. $M,v\vDash\psi$. Thus, $M,w\vDash C_{\theta}\psi$.
    \item For $\mathrm{MIX}$, suppose that $M,w\vDash  C_{\theta}\varphi$. 
    Since $(\mathop{\sim_i} \circ\mathop{\approx_{\theta}})^+$ 
    is equivalent and the transitive closure, $M,w\vDash\varphi\wedge
    [\approx]_{\theta} I_i C_{\theta}\varphi$. 
    \item For $\mathrm{IND}$, suppose that $M,w\vDash 
    C_{\theta}(\varphi\to [\approx]_{\theta} I_i\varphi)$ and $M,w\vDash\varphi$, then for all $v$ such that $(w,v)\in (\mathop{\sim_i} \circ
    \mathop{\approx_{\theta}})^+$, $M,v\vDash\varphi\to[\approx]_{\theta} I_i\varphi$. Thus, $M,w\vDash[\approx]_{\theta} I_i\varphi$. 
    $\varphi$ holds at all the possible worlds 
    from $w$ on $\mathop{\sim_i} \circ\mathop{\approx_{\theta}}$, 
    and $[\approx]_{\theta} I_i\varphi$ holds even at that worlds. Therefore, $M,w\vDash C_{\theta}\varphi$.
    \end{itemize}
    The remaining task is to prove that if the assumptions are valid, they are also valid for all the inference rules. All of them are straightforward.
  \end{proof}  
  
\subsection{Completeness}
We prove the reverse direction: all the valid formulas are the theorems. We use the idea of a canonical model in the proof on modal logic \cite{chellas1980modal}. In \textbf{ALPC}, we can take a set of formulas, such as $\Phi = \{([\approx]_{\theta} I_i)^n \varphi\mid n\in \mathbb{N}\} \cup \{\neg C_{\theta}\varphi\}$ for each $\theta\in\Theta$, where $([\approx]_{\theta} I_i)^n$ is $n$ iterations of $[\approx]_{\theta} I_i$. Therefore, our logic is no longer compact. It is necessary to restrict a canonical model to a specific set of formulas. This technique is used in the proof on logic with common knowledge defined by the reflexive-transitive closure of relations. We customize the tools and techniques in \cite{van2007dynamic} for our logic in a similar to \cite{yudai2022-1}. Moreover, we cannot construct a unique canonical model, unlike the standard strategy on modal logic. Because of the global definition, the truth value of $A_{\theta}\varphi$ is the same within the domain in a model, and then we construct multiple models with different domains based on the idea of a canonical model, called \textit{divided models}.

It is necessary to restrict a maximal consistent set used in constructing a divided model to a specific set of formulas. At first, we define a \textit{closure} as a restricted set of formulas.
\begin{df}
  Let $cl : \mathcal{L}\to 2^{\mathcal{L}}$ be the function such that for every $\varphi\in\mathcal{L}$, $cl(\varphi)$ is the smallest set satisfying that: 
  \begin{itemize}
    \item[1.] $\varphi\in cl(\varphi)$; 
    \item[2.] If $\psi\in cl(\varphi)$ then $sub(\psi)\subseteq cl(\varphi)$, where $sub(\psi)$ is the set of subformulas of $\psi$;
    \item[3.] If $\psi\in cl(\varphi)$ and $\psi$ is not a form of negation, then $\neg\psi\in cl(\varphi)$;
    \item[4.] If $A_{\theta} \psi \in cl(\varphi)$, then $C_{\theta'}A_{\theta} \psi, C_{\theta'}\neg A_{\theta} \psi\in cl(\varphi)$ for each chain $\theta'\in \Theta$;
    \item[5.] If $A_{\theta} \psi \in cl(\varphi)$, then $[\approx]_{\theta} p, A_{\theta}\chi, A_{\theta'}\psi\in cl(\varphi)$ for each chain $\theta'\in \Theta$, where $\chi\in sub(\psi)$ and $p$ is an atomic proposition in $cl(\varphi)$;
    \item[6.] If $I_i\psi\in sub(\varphi)$, then $ I_i I_i\psi$, $I_i\neg I_i\psi\in cl(\varphi)$;
    \item[7.] If $[\approx]_{\theta}\psi\in sub(\varphi)$, then $[\approx]_{\theta}[\approx]_{\theta}\psi$, $[\approx]_{\theta}\neg[\approx]_{\theta}\psi\in cl(\varphi)$;
    \item[8.] If $C_{\theta}\psi\in cl(\varphi)$, then $[\approx]_{\theta} I_i C_{\theta}\psi\in cl(\varphi)$;
    \item[9.] If $I_i C_{\theta} \psi\in cl(\varphi)$, then $I_i I_i C_{\theta} \psi$, $I_i\neg I_i C_{\theta}\psi\in cl(\varphi)$;
    \item[10.] If $[\approx]_{\theta} I_i C_{\theta}\psi\in cl(\varphi)$, then $ [\approx]_{\theta}[\approx]_{\theta} I_iC_{\theta}\psi$ ,$[\approx]_{\theta}\neg[\approx]_{\theta} I_iC_{\theta}\psi\in cl(\varphi)$;
    \item[11.] If $E_{\theta}\psi\in cl(\varphi)$, then $A_{\theta} \psi$, $C_{\theta} \psi\in cl(\varphi)$.
  \end{itemize} 
\end{df}
\noindent

\begin{lemma}
  For every $\varphi$, $cl(\varphi)$ is finite.
\end{lemma}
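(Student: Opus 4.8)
The plan is to produce an explicit finite set $S$ of formulas that contains $\varphi$ and is closed under all of conditions 1--11; since $cl(\varphi)$ is by definition the \emph{smallest} such set, it follows that $cl(\varphi)\subseteq S$, and a subset of a finite set is finite. At the outset I would record that the ambient vocabulary is finite: $\mathcal{G}$ and $\Theta$ are finite by assumption, $sub(\varphi)$ and $\mathrm{At}(\varphi)$ are finite, and \emph{no} condition introduces a new atomic proposition, agent, or chain (condition 5, for instance, only re-uses atoms already present in $cl(\varphi)$). Hence the operators and atoms available are fixed and finite, and it suffices to exhibit a uniform bound on the length (equivalently, the modal nesting depth) of the formulas that the conditions can generate; finitely many formulas of bounded length over a finite alphabet then give the result.

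I would first dispatch the routine conditions. Conditions 1--3 produce only subformulas of formulas already present together with their single negations, which is the usual Fischer--Ladner-style closure and stays finite. Conditions 6 and 7 are triggered solely by elements of $sub(\varphi)$, hence fire only finitely often, each time adding two formulas. Conditions 4, 5, and 11 prepend an $A$- or $C$-operator (indexed over the finite set $\Theta$), or pass to subformulas, so that re-applying them keeps every formula inside a finite family such as $\{A_{\theta'}\chi : \theta'\in\Theta,\ \chi\in sub(\psi)\cup\{\psi\}\}$ or $\{C_{\theta'}(\pm A_{\theta}\chi)\}$; each such family is finite because $\Theta$ and the relevant sets of cores are finite.

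The step I expect to be the main obstacle is the mutual recursion among conditions 8, 9, and 10, which mirrors the closure used for transitive-closure / common-knowledge modalities and is the only place where a rule feeds back a formula strictly longer than its trigger. The crucial observation to verify is that this recursion terminates after adding a bounded number of layers: the formula $[\approx]_{\theta}[\approx]_{\theta} I_i C_{\theta}\psi$ produced by condition 10 (and its variant with a negation inside) is no longer of the shape $[\approx]_{\theta'} I_j C_{\theta'}\psi'$ required to trigger condition 10 again, since the immediate argument of the outer $[\approx]_{\theta}$ is a $[\approx]$-formula rather than an $I_j$-formula; likewise the output $I_i I_i C_{\theta}\psi$ of condition 9 is not of the form $I_j C_{\theta'}\psi'$ and so does not re-trigger condition 9. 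Thus conditions 9 and 10 each add at most one extra $I_i$-, resp.\ $[\approx]_{\theta}$-layer over the base formula $[\approx]_{\theta} I_i C_{\theta}\psi$ supplied by condition 8, and since the set of $C$-cores $C_{\theta}\psi$ reachable from $sub(\varphi)$ via conditions 4 and 11 is finite, only finitely many such formulas arise. Collecting these bounds yields a uniform ceiling on the nesting depth of every member of $cl(\varphi)$, and hence a finite $S$ as required; the remaining work is the routine bookkeeping of checking that the concrete $S$ one writes down is indeed closed under all eleven conditions.
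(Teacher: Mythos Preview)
Your proposal is correct and follows essentially the same approach as the paper. The paper's own proof is a one-line sketch (``by induction on the structure of $cl(\varphi)$, easily provable because $\mathcal{G}$ and $\Theta$ are finite''); your careful termination analysis of the mutual recursion in conditions 8--10, together with the observation that conditions 6 and 7 trigger only on $sub(\varphi)$ rather than on all of $cl(\varphi)$, is precisely the bookkeeping the paper leaves implicit.
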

\begin{proof}
  We prove it by induction on the structure of $cl(\varphi)$. It is easily provable for all the conditions because a set of agent $\mathcal{P}$ and a set of chains $\Theta$ is finite.
\end{proof}

We define a consistent set and a maximal consistent set in a closure. A set of the latter is used as a domain of a divided model.
\begin{df}
  Let $\Phi$ be the closure of a formula. 
  $\Gamma$ is a consistent set in $\Phi$ iff 
  \begin{itemize}
    \item[$\bullet$] $\Gamma \subseteq \Phi$, 
    \item[$\bullet$] $\Gamma \nvdash \bot$.
  \end{itemize}
  Moreover, $\Gamma$ is a maximal consistent set in $\Phi$ iff 
  \begin{itemize}
    \item[$\bullet$] $\Gamma$ is a consistent set,
    \item[$\bullet$] There is no $\Gamma'\subseteq\Phi$ such that $\Gamma\subset\Gamma'$ and $\Gamma'\nvdash \bot$.
  \end{itemize}
\end{df}

We prove that a consistent set can always expand a maximal consistent set that includes the original set. 
\begin{lemma}
  Let $\Phi$ be the closure of a formula. If $\Gamma$ is a consistent set in $\Phi$, then there exists a maximal consistent set $\Delta$ in $\Phi$ such that $\Gamma\subseteq\Delta$. 
\end{lemma}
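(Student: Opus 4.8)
The plan is to run a finite Lindenbaum-style construction, exploiting that $\Phi$ is finite by the preceding lemma. The main point worth stressing is that, since the ambient set $\Phi = cl(\text{some formula})$ has only finitely many elements, we never need compactness or Zorn's lemma (indeed the paper has just observed that the logic fails to be compact); the maximal set is built in finitely many explicit steps entirely within $\Phi$.

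First I would fix an enumeration $\Phi = \{\psi_1,\dots,\psi_m\}$, which exists because $|\Phi| < \infty$. I would then define an increasing chain of subsets of $\Phi$ by recursion: set $\Gamma_0 := \Gamma$, and for $0 \le k < m$ put
\[
\Gamma_{k+1} :=
\begin{cases}
\Gamma_k \cup \{\psi_{k+1}\} & \text{if } \Gamma_k \cup \{\psi_{k+1}\} \nvdash \bot, \\
\Gamma_k & \text{otherwise.}
\end{cases}
\]
By induction on $k$, each $\Gamma_k$ is a consistent set in $\Phi$: the base case is precisely the hypothesis that $\Gamma$ is consistent, and the inductive step is immediate from the case distinction, which only ever enlarges the set when consistency is preserved. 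Since $\Gamma_k \subseteq \Phi$ throughout, I would then set $\Delta := \Gamma_m$; this gives $\Gamma = \Gamma_0 \subseteq \Delta \subseteq \Phi$ with $\Delta \nvdash \bot$, so $\Delta$ is a consistent set in $\Phi$ containing $\Gamma$.

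It remains to verify maximality in the sense of the definition of a maximal consistent set in $\Phi$. Suppose toward a contradiction that some $\Gamma' \subseteq \Phi$ satisfies $\Delta \subsetneq \Gamma'$ and $\Gamma' \nvdash \bot$, and pick $\psi \in \Gamma' \setminus \Delta$. Writing $\psi = \psi_{k+1}$ for its index, the fact that $\psi_{k+1} \notin \Delta \supseteq \Gamma_{k+1}$ forces the second clause of the recursion to have fired at stage $k+1$, i.e. $\Gamma_k \cup \{\psi_{k+1}\} \vdash \bot$. But $\Gamma_k \subseteq \Delta \subseteq \Gamma'$ and $\psi_{k+1} \in \Gamma'$, so $\Gamma_k \cup \{\psi_{k+1}\} \subseteq \Gamma'$, and since deducibility (defined via finite subsets as in the proof system) is monotone in its premises, we get $\Gamma' \vdash \bot$, contradicting the consistency of $\Gamma'$. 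Hence no such $\Gamma'$ exists and $\Delta$ is maximal consistent in $\Phi$.

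I do not expect a genuine obstacle here: everything rests on the finiteness of $\Phi$ together with the definition of $\vdash$ through finite subsets, which makes preservation of consistency under the recursion and monotonicity of deducibility both transparent. The only point demanding a little care is the bookkeeping in the maximality step, namely matching an arbitrary $\psi \in \Gamma' \setminus \Delta$ to the stage at which it was rejected and observing that the very subset witnessing its rejection already lies inside $\Gamma'$.
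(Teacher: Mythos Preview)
Your proof is correct and follows essentially the same approach as the paper, which simply states that one proceeds ``in the same way as Lindenbaum's lemma'' by adding formulas from $\Phi$ to $\Gamma$ while preserving consistency. You have just written out the details of that finite Lindenbaum construction explicitly, including the maximality verification, so there is nothing to correct.
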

\begin{proof}
  It is proven in the same way as Lindenbaum's lemma. We obtain a maximal consistent set by adding formulas in $\Phi$ to $\Gamma$ so as to preserve the consistency.
\end{proof}

Furthermore, we prove a corollary of Lemma 2.
\begin{lemma}
  Let $\Phi$ be the closure of a formula. For every $\varphi\in\Phi$, $\Gamma\vdash\varphi$ iff $\varphi\in\Delta$ for every maximal consistent set $\Delta$ in $\Phi$ such that $\Gamma\subseteq\Delta$.
\end{lemma}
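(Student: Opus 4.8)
The plan is to derive this directly from Lemma 2 (the closure analogue of Lindenbaum's lemma) together with purely propositional reasoning about the theoremhood relation $\vdash$, exploiting the definition that $\Gamma\vdash\varphi$ iff $\vdash\bigwedge\Gamma'\to\varphi$ for some finite $\Gamma'\subseteq\Gamma$. Throughout I take $\Gamma$ to be a consistent set in $\Phi$, so that by Lemma 2 at least one maximal consistent $\Delta$ with $\Gamma\subseteq\Delta$ exists and the universal quantifier on the right is non-vacuous. For the left-to-right direction I would fix such a $\Delta$ and a witness $\vdash\bigwedge\Gamma'\to\varphi$ with $\Gamma'\subseteq\Gamma$ finite; since $\Gamma'\subseteq\Delta$ this same witness gives $\Delta\vdash\varphi$. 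It then suffices to show that a maximal consistent set is deductively closed relative to $\Phi$: if $\varphi\in\Phi$ but $\varphi\notin\Delta$, then $\Delta\cup\{\varphi\}$ properly extends $\Delta$ inside $\Phi$, so by maximality $\Delta\cup\{\varphi\}\vdash\bot$, whence $\Delta\vdash\neg\varphi$; combined with $\Delta\vdash\varphi$ this contradicts $\Delta\nvdash\bot$. Hence $\varphi\in\Delta$. Every step here is a tautological manipulation licensed by $\mathrm{TAUT}$ and $\mathrm{MP}$, so no appeal to the modal rules is needed.

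For the converse I would argue by contraposition: assuming $\Gamma\nvdash\varphi$, I construct a maximal consistent $\Delta\supseteq\Gamma$ with $\varphi\notin\Delta$. Since $\Gamma\nvdash\varphi$, $\Gamma$ is already consistent, and I want to enlarge it by a formula that \emph{excludes} $\varphi$ while staying inside $\Phi$. The main obstacle is that $\Phi$ is a closure, not the full language, and is \emph{not} closed under arbitrary negation, so $\neg\varphi$ itself need not belong to $\Phi$. I would resolve this with the ``closure complement'' of $\varphi$: if $\varphi$ is not a negation, then $\neg\varphi\in\Phi$ by the closure clause adjoining negations of non-negated members; if $\varphi=\neg\psi$, then $\psi\in\Phi$ by the subformula clause, and $\psi$ serves as $\neg\varphi$ up to double negation. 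In either case I obtain a formula $\chi\in\Phi$ with $\vdash\chi\leftrightarrow\neg\varphi$.

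It remains to check that $\Gamma\cup\{\chi\}$ is consistent and to extend it. If $\Gamma\cup\{\chi\}\vdash\bot$, then propositionally $\Gamma\vdash\neg\chi$, and since $\vdash\neg\chi\leftrightarrow\varphi$ this gives $\Gamma\vdash\varphi$, contradicting the hypothesis; so $\Gamma\cup\{\chi\}$ is a consistent set in $\Phi$. Applying Lemma 2 yields a maximal consistent $\Delta$ in $\Phi$ with $\Gamma\cup\{\chi\}\subseteq\Delta$, hence $\chi\in\Delta$; as $\Delta$ is consistent and $\vdash\chi\to\neg\varphi$, we cannot also have $\varphi\in\Delta$. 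This $\Delta$ witnesses the failure of the right-hand side, completing the contrapositive and the lemma. I expect the only genuinely non-routine point to be the closure-complement step, everything else being bookkeeping on top of Lemma 2 and propositional logic.
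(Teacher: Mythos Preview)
Your proof is correct and follows essentially the same route as the paper: the forward direction uses deductive closure of maximal consistent sets, and the converse runs by contraposition, adjoining a negation and invoking Lemma~2. In fact you are more careful than the paper on one point: the paper simply writes ``$\Gamma\cup\{\neg\varphi\}$ is a consistent set in $\Phi$'' and applies Lemma~2, tacitly assuming $\neg\varphi\in\Phi$, whereas closure condition~3 only guarantees this when $\varphi$ is not itself a negation. Your closure-complement manoeuvre (taking $\chi=\neg\varphi$ or $\chi=\psi$ according as $\varphi$ is non-negated or $\varphi=\neg\psi$) is the standard fix for this and makes the argument fully rigorous.
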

\begin{proof}
  From left to right, suppose that $\Gamma\vdash\varphi$, $\Gamma\subseteq\Delta$, and $\Delta\in W^*_{\Lambda}$. There exists a finite subset $\Gamma'\subseteq\Gamma$ such that $\vdash \bigwedge\Gamma'\to \varphi$. Since $\Gamma\subseteq\Delta$ and $\Delta$ is a maximal consistent set, $\varphi\in\Delta$.
  From right to left, we prove it by contraposition. It is sufficient to prove that if $\Gamma\nvdash\varphi$, then there exists a maximal consistent set $\Delta$ such that $\Gamma\subseteq\Delta$ and $\varphi\not\in\Delta$.
  By the assumption, $\Gamma\cup\{\neg\varphi\}$ is consistent set in 
  $\Phi$. Thus, we obtain a maximal consistent set $\Delta$ in $\Phi$ such that $\Gamma\subseteq\Delta$ and $\varphi\not\in\Delta$ by Lemma 2.
\end{proof}

In this paper, we have adopted the global definition. This means that we cannot construct the unique canonical model with the domain that is the whole set of maximal consistent sets. We need to divide the whole set into some groups so that the truth value of $A_{\theta}\varphi$ is the same within the group. At first, we define a base model restricted by a set of formulas.
\begin{df}
  Let $\Phi$ be the closure of a formula. The base model $M^*$ for $\Phi$ is a tuple $\langle W^*,$$ \{(\mathop{\sim_i})^*\}_{i\in\mathcal{G}}, V^*\rangle$, 
  where:
    \begin{itemize}
      \item $W^*$ $\coloneqq\{\Gamma \mid \Gamma 
      \text{ is a maximal consistent set in }\Phi\}$,
      \item $(\Gamma,\Delta)\in (\mathop{\sim_i})^* \textit{\  iff  \ }
      \{\varphi\mid I_i\varphi\in \Gamma\}\subseteq \Delta$,
      \item $V^*(p)\coloneqq \{\Gamma \mid p\in\Gamma\}$.
    \end{itemize}
\end{df}
\noindent
We define a relation $(\mathop{\approx_{\theta}})^*$ on the model $M^*$ for a closure.
\begin{df}
  Let $\Phi$ be the closure of a formula.
  For each $\theta\in\Theta$, a relation $(\mathop{\approx_{\theta}})^*$ on $W^*$ is defined by $(\Gamma,\Delta) \in (\mathop{\approx_{\theta}})^* \textit{\  iff  \ } \{\varphi\mid [\approx]_{\theta}\varphi\in \Gamma\}\subseteq \Delta$.
\end{df}

Now, we are ready to define a divided model for a set of formulas. A divided model is defined as the model generated by a maximal consistent set from the base model for a set of formulas. 
\begin{df}
Let $\Phi$ be the closure of a formula and $\Lambda$ be a maximal consistent set in $\Phi$. The divided model $M^*_{\Lambda}$ by $\Lambda$ for $\Phi$ is a tuple $\langle W^*_{\Lambda}, \{(\mathop{\sim_i})^*_{\Lambda}\}_{i\in\mathcal{G}}, V^*_{\Lambda},\{(\mathscr{A}_{\theta})^*_{\Lambda}\}_{\theta \in \Theta}\rangle$, where:
  \begin{itemize}
    \item $W^*_{\Lambda}$ $\coloneqq\{\Gamma \mid \Gamma 
    \text{ is a maximal consistent set in }\Phi$, and 
    $(\Lambda, \Gamma)\in \bigcup_{i\in \mathcal{G}, \theta\in\Theta} 
    ((\sim_i)^*\circ (\approx_\theta)^*)^+\}$; 
    \item $(\mathop{\sim_i})^*_{\Lambda} \coloneqq (\mathop{\sim_i})^* \cap 
    (W^*_{\Lambda} \times W^*_{\Lambda})$;
    \item $V^*_{\Lambda}(p)\coloneqq V^*(p) \cap W^*_{\Lambda}$;
    \item $(\mathscr{A}_{\theta})^*_{\Lambda} \coloneqq 
    \{p\mid \text{for all }\Gamma\in W^*_{\Lambda}, A_{\theta} p\in \Gamma\}
    $;
  \end{itemize}
\end{df}
 
We prove that every divided model for Closures is an epistemic model with awareness.
\begin{lemma}
  Let $\Lambda$ be a maximal consistent set in $\Phi$. For every $\varphi$, each divided model by $\Lambda$ for the closure of $\varphi$ is an epistemic model with awareness. 
\end{lemma}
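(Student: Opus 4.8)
The plan is to check directly that the tuple $M^*_{\Lambda}=\langle W^*_{\Lambda}, \{(\mathop{\sim_i})^*_{\Lambda}\}_{i\in\mathcal{G}}, V^*_{\Lambda},\{(\mathscr{A}_{\theta})^*_{\Lambda}\}_{\theta \in \Theta}\rangle$ satisfies each of the four clauses of Definition 1. Two of them are immediate. For the valuation, the stipulation $V^*_{\Lambda}(p)=V^*(p)\cap W^*_{\Lambda}$ already gives a map $\mathcal{P}\to 2^{W^*_{\Lambda}}$, so there is nothing to prove. For non-emptiness of $W^*_{\Lambda}$, I would show $\Lambda\in W^*_{\Lambda}$: by $\mathrm{T_L}$ ($I_i\varphi\to\varphi$) and $\mathrm{T_{[\approx]}}$ ($[\approx]_{\theta}\varphi\to\varphi$), both $(\mathop{\sim_i})^*$ and $(\mathop{\approx_{\theta}})^*$ are reflexive on $W^*$ (if $I_i\varphi\in\Gamma$, then $\varphi\in sub(I_i\varphi)\subseteq\Phi$ by closure condition~2, so $\varphi\in\Gamma$ by maximality, and symmetrically for $[\approx]_{\theta}$). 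Hence $(\Lambda,\Lambda)\in(\mathop{\sim_i})^*\circ(\mathop{\approx_{\theta}})^*$ for any $i,\theta$, so $(\Lambda,\Lambda)$ lies in the generating union $\bigcup_{i\in\mathcal{G},\theta\in\Theta}((\mathop{\sim_i})^*\circ(\mathop{\approx_{\theta}})^*)^+$, placing $\Lambda$ in $W^*_{\Lambda}$.

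The substantive clause is that each $(\mathop{\sim_i})^*_{\Lambda}$ is an equivalence relation. Since $(\mathop{\sim_i})^*_{\Lambda}=(\mathop{\sim_i})^*\cap(W^*_{\Lambda}\times W^*_{\Lambda})$ is just the restriction of $(\mathop{\sim_i})^*$ to $W^*_{\Lambda}$, and the restriction of an equivalence relation to a subset is again an equivalence relation, it suffices to prove that $(\mathop{\sim_i})^*$ is an equivalence relation on the base domain $W^*$. This is the standard canonical-model computation for $\mathbf{S5}$: reflexivity from $\mathrm{T_L}$, and symmetry together with transitivity from $\mathrm{T_L}$ and $\mathrm{5_L}$ (via the Euclidean property), using $\mathrm{K_L}$ and $\mathrm{LG}$. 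The one point needing care is that the maximal consistent sets here live in the finite closure $\Phi$ rather than in the full language, so the witnessing formulas used in the $4$/$5$ steps must themselves belong to $\Phi$; this is exactly what closure conditions~6 and~9 supply, forcing $I_iI_i\psi$ and $I_i\neg I_i\psi$ into $cl(\varphi)$ whenever the relevant $I_i$-formula occurs. With these available, the classical argument transfers verbatim.

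For the awareness clause, $(\mathscr{A}_{\theta})^*_{\Lambda}=\{p\mid A_{\theta}p\in\Gamma\text{ for all }\Gamma\in W^*_{\Lambda}\}$ is a set of atomic propositions by construction, so only the inclusion condition and non-emptiness remain. Monotonicity---if $\theta'\preceq\theta$ then $(\mathscr{A}_{\theta})^*_{\Lambda}\subseteq(\mathscr{A}_{\theta'})^*_{\Lambda}$---follows from axiom $\mathrm{AI}$: if $p\in(\mathscr{A}_{\theta})^*_{\Lambda}$ then $A_{\theta}p\in\Gamma$ for every $\Gamma\in W^*_{\Lambda}$, and since $\theta'\preceq\theta$ gives $\vdash A_{\theta}p\to A_{\theta'}p$ with $A_{\theta'}p\in\Phi$ guaranteed by closure condition~5, maximality yields $A_{\theta'}p\in\Gamma$ for all such $\Gamma$, that is, $p\in(\mathscr{A}_{\theta'})^*_{\Lambda}$.

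The step I expect to be the main obstacle is the non-emptiness of the awareness sets $(\mathscr{A}_{\theta})^*_{\Lambda}$, since no axiom forces $A_{\theta}p$ to be derivable for any atomic $p$, so an \emph{awareness-free} consistent set could in principle give an empty set, violating the non-emptiness demanded by Definition 1. I would handle this in two layers. First, the coherence of the definition is secured by $\mathrm{KA}$ and $\mathrm{NKA}$, which make the truth value of $A_{\theta}p$ constant throughout the generated domain $W^*_{\Lambda}$, so $(\mathscr{A}_{\theta})^*_{\Lambda}$ genuinely captures a single global state of awareness. Second, strict non-emptiness I would obtain either by a harmless convention---assuming $\mathcal{P}$ contains, or augmenting the closure with, a distinguished atomic proposition aware under every chain---or by observing that an empty awareness set merely turns the derived indistinguishable relation $\mathop{\approx_{\theta}}$ of Definition 2 into the total relation, which is still an equivalence relation, so the ensuing truth lemma and completeness argument are unaffected even if the non-emptiness stipulation is dropped. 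The delicate bookkeeping is checking that whichever fix is adopted leaves the inclusion condition intact for every pair $\theta'\preceq\theta$ simultaneously.
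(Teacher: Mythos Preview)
Your proposal is correct and follows essentially the same route as the paper: the non-emptiness of $W^*_{\Lambda}$ and the valuation are dismissed as trivial, the equivalence-relation check for $(\mathop{\sim_i})^*_{\Lambda}$ is the standard $\mathbf{S5}$ canonical argument using $\mathrm{T_L}$, $\mathrm{5_L}$, $\mathrm{K_L}$, $\mathrm{LG}$, and the awareness-set monotonicity is obtained from $\mathrm{AI}$ exactly as you do.

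The one place you diverge is that you take seriously the non-emptiness requirement on $(\mathscr{A}_{\theta})^*_{\Lambda}$ from Definition~1 and propose workarounds; the paper's proof simply does not address this clause at all, verifying only the inclusion condition. Your observation that no axiom forces any $A_{\theta}p$ to hold is correct, so strictly speaking the paper's proof has the same gap you identify---your discussion is more careful than the original on this point, not less.
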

\begin{proof}
  We prove that each divided model for the closure of $\varphi$ satisfies the definition of an epistemic model with awareness. For $W^*_{\Lambda}$ and $V^*_{\Lambda}$, the proofs are trivial.
  \begin{itemize}
  \item For $(\mathop{\sim_i})^*_{\Lambda}$, it can be proven in a similar way as \textbf{S5}. For reflexivity, it is sufficient to prove $(w,w)\in (\mathop{\sim_i})^*_{\Lambda}$ for all $w\in W^*_{\Lambda}$. It follows from $\mathrm{K_T}$. For symmetricity, it is sufficient to prove that for all $w,v\in W$, if $(w,v)\in (\mathop{\sim_i})^*_{\Lambda}$, then $(v,w)\in (\mathop{\sim_i})^*_{\Lambda}$. It follows from $\mathrm{T_L}$ and $\mathrm{5_L}$. For transitivity, it is sufficient to prove that for all $w,v,u\in W^*_{\Lambda}$, if $(w,v)\in (\mathop{\sim_i})^*_{\Lambda}$ and $(v,u)\in (\mathop{\sim_i})^*_{\Lambda}$, then $(w,u)\in (\mathop{\sim_i})^*_{\Lambda}$. It follows through $I_i\varphi\to I_i I_i\varphi$ from $\mathrm{K_L,T_L,5_L}$, and $\mathrm{LG}$.
  \item For $(\mathscr{A}_{\theta})^*_{\Lambda}$, 
  it is sufficient to prove that 
  for every $\theta'\in \Theta$, 
  if $\theta' \preceq \theta$, then  
  $(\mathscr{A}_{\theta})^*_{\Lambda} \subseteq 
  (\mathscr{A}_{\theta'})^*_{\Lambda}$, which means that 
  for all $w$, if $A_{\theta} p \in w$, 
  then $A_{\theta'} p \in w$. 
  Suppose that $A_{\theta} p \in w$. 
  It follows that $A_{\theta'} p\in w$ from $\mathrm{AI}$.
\end{itemize} 
\vspace{-5mm}
\end{proof}

Moreover, we also define a relation $(\mathop{\approx_{\theta}})^*_{\Lambda}$ on a divided model $M^*_{\Lambda}$ for a closure $\Phi$ similarly.
\begin{df}
  Let $\Phi$ be the closure of a formula. For each $\theta\in\Theta$, a relation $(\mathop{\approx_{\theta}})^*_{\Lambda}$ on $W^*_{\Lambda}$ is defined as $(\mathop{\approx_{\theta}})^*_{\Lambda} \cap (W^*_{\Lambda} \times W^*_{\Lambda})$.
\end{df}

\begin{lemma}
  A relation $(\mathop{\approx_{\theta}})^*_{\Lambda}$ on each divided model is an indistinguishable relation $\mathop{\approx_{\theta}}$ on an epistemic model with awareness.
\end{lemma}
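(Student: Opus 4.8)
The plan is to show that the canonically defined relation $(\mathop{\approx_{\theta}})^*_{\Lambda}$ coincides with the indistinguishable relation that Definition 2 assigns to the awareness set $(\mathscr{A}_{\theta})^*_{\Lambda}$ of the divided model. Since $V^*_{\Lambda}(p) = V^*(p)\cap W^*_{\Lambda}$ and $p\in V^*(p)$-membership is just $p\in\Gamma$, this reduces to proving, for all $\Gamma,\Delta\in W^*_{\Lambda}$, that $(\Gamma,\Delta)\in(\mathop{\approx_{\theta}})^*_{\Lambda}$ iff $p\in\Gamma \Leftrightarrow p\in\Delta$ for every $p\in(\mathscr{A}_{\theta})^*_{\Lambda}$. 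Before the two inclusions I would record that $(\mathop{\approx_{\theta}})^*$ is an equivalence relation, in particular symmetric; this is proved exactly as the clause for $(\mathop{\sim_i})^*_{\Lambda}$ in Lemma 4, now invoking $\mathrm{K_{[\approx]}}$, $\mathrm{T_{[\approx]}}$, and $\mathrm{5_{[\approx]}}$ in place of the $I_i$-axioms.

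For the left-to-right inclusion, assume $(\Gamma,\Delta)\in(\mathop{\approx_{\theta}})^*_{\Lambda}$ and fix $p\in(\mathscr{A}_{\theta})^*_{\Lambda}$, so that $A_{\theta}p$ lies in every world of $W^*_{\Lambda}$, in particular in both $\Gamma$ and $\Delta$. If $p\in\Gamma$, then $A_{\theta}p\wedge p\in\Gamma$, so $\mathrm{AN[\approx]}$ gives $[\approx]_{\theta}p\in\Gamma$, whence $p\in\Delta$ by the definition of $(\mathop{\approx_{\theta}})^*$. Applying the same step to $(\Delta,\Gamma)\in(\mathop{\approx_{\theta}})^*$, which is available by symmetry, yields $p\in\Delta\Rightarrow p\in\Gamma$, so $\Gamma$ and $\Delta$ agree on every aware atom.

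For the right-to-left inclusion, which I expect to be the main obstacle, assume $\Gamma$ and $\Delta$ agree on all $p\in(\mathscr{A}_{\theta})^*_{\Lambda}$ and prove $\{\chi\mid[\approx]_{\theta}\chi\in\Gamma\}\subseteq\Delta$. I would first upgrade $\mathrm{AN[\approx]}$ to full rigidity of aware literals: from the derivable rigidity of awareness $\vdash A_{\theta}p\to[\approx]_{\theta}A_{\theta}p$ (obtained by chaining $\mathrm{KA}$, $\mathrm{MIX}$, and $\mathrm{T_L}$) together with $\mathrm{T_{[\approx]}}$ and $\mathrm{5_{[\approx]}}$, one derives $\vdash A_{\theta}p\wedge\neg p\to[\approx]_{\theta}\neg p$ as well, so that under $A_{\theta}p$ both $p$ and $\neg p$ are boxed. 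As $(\mathscr{A}_{\theta})^*_{\Lambda}$ meets the finite closure in finitely many atoms, the full conjunction $\alpha$ of aware literals true at $\Gamma$, which equals that at $\Delta$, satisfies $[\approx]_{\theta}\alpha\in\Gamma\cap\Delta$ by $\mathrm{K_{[\approx]}}$ and $\mathrm{[\approx]G}$.

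The delicate point is that canonical-relatedness is governed by all boxed formulas, not only those with atomic bodies, so I must still transfer every $\chi$ with $[\approx]_{\theta}\chi\in\Gamma$ into $\Delta$, including bodies that contain non-aware atoms or nested $I_i/C_{\theta}$ modalities, which enter the closure through clauses 5, 8, and 10. I expect this to be the crux: one has to verify that, within a single generated divided model, such boxed formulas cannot separate two atom-agreeing worlds. I would establish this by exploiting the $\mathrm{MIX}$/$\mathrm{IND}$ coupling of $[\approx]_{\theta}$ with $C_{\theta}$ and the rigidity axioms $\mathrm{KA}$, $\mathrm{NKA}$ to show that each such $\chi$ is either globally forced or already entailed by $\alpha$, so that $\alpha\in\Delta$ drags $\chi$ into $\Delta$. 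Once this transfer is secured, $\{\chi\mid[\approx]_{\theta}\chi\in\Gamma\}\subseteq\Delta$ follows, giving $(\Gamma,\Delta)\in(\mathop{\approx_{\theta}})^*_{\Lambda}$ and hence the desired equality with the indistinguishable relation of $(\mathscr{A}_{\theta})^*_{\Lambda}$.
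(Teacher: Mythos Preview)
Your equivalence–relation step and your left–to–right inclusion (if $(\Gamma,\Delta)\in(\mathop{\approx_{\theta}})^*_{\Lambda}$ then $\Gamma$ and $\Delta$ agree on every $p\in(\mathscr{A}_{\theta})^*_{\Lambda}$, via $\mathrm{AN[\approx]}$ and symmetry) are exactly what the paper does. In fact that is \emph{all} the paper does: after observing that $(\mathop{\approx_{\theta}})^*_{\Lambda}$ is an equivalence relation by the $\mathrm{K_{[\approx]}}$, $\mathrm{T_{[\approx]}}$, $\mathrm{5_{[\approx]}}$ axioms, the paper says it suffices to show that canonically $(\mathop{\approx_{\theta}})^*_{\Lambda}$–related worlds agree on aware atoms, proves the forward implication from $\mathrm{AN[\approx]}$, and gets the backward implication from symmetry. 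The paper never attempts the converse inclusion (agreement on aware atoms $\Rightarrow$ canonical relatedness) that you call the ``main obstacle''.

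So your proposal is strictly more ambitious than the paper's own argument, and the part that overlaps with the paper is correct. The extra right–to–left direction, however, is not a proof as written: the crucial claim that every $\chi$ with $[\approx]_{\theta}\chi\in\Gamma$ is ``either globally forced or already entailed by $\alpha$'' is asserted without argument, and it is not clear it holds. The closure deliberately contains formulas such as $[\approx]_{\theta}I_iC_{\theta}\psi$ and $[\approx]_{\theta}\neg[\approx]_{\theta}I_iC_{\theta}\psi$ (clauses 8--10), whose bodies may involve atoms outside $(\mathscr{A}_{\theta})^*_{\Lambda}$ and genuine modal content about $\mathop{\sim_i}$; none of $\mathrm{KA}$, $\mathrm{NKA}$, $\mathrm{MIX}$, $\mathrm{IND}$ shows that such a body is determined by the aware literals in $\alpha$, and there is no axiom that makes $[\approx]_{\theta}$ collapse on non–aware information. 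If you want to keep the full–equality reading of the lemma you would need a different mechanism here; as it stands, the paper settles for the one inclusion, and your argument for that inclusion matches it.
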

\begin{proof}
  At first, it is needed to prove that $(\mathop{\approx_{\theta}})^*_{\Lambda}$ is an equivalence relation. This is proven in the same way as $(\mathop{\sim_i})^*_{\Lambda}$ because both relations are defined essentially in the same way and both operators have the axioms corresponding to $K, T,$ and $5$. It is sufficient to prove that for all $(w,v)\in (\mathop{\approx_{\theta}})^*_{\Lambda}$ and every $p\in (\mathscr{A}_{\theta})^*_{\Lambda}$, if $w\in V^*_{\Lambda}(p)$, then $v\in V^*_{\Lambda}(p)$, and vice versa. From left to right, suppose that $w\in V^*_{\Lambda}(p)$ for every $p\in (\mathscr{A}_{\theta})^*_{\Lambda}$. Thus, $A_{\theta} p\in w$ because of the definition of $(\mathscr{A}_{\theta})^*_{\Lambda}$. It follows that $p\in v$ from $\mathrm{AN[\approx]}$. The reverse direction is proven by using that $(\mathop{\approx_{\theta}})^*_{\Lambda}$ is an equivalence relation.
\end{proof}

The next task is to define a $(C_{\theta})_{\Lambda}$-path on a divided model $M^*_{\Lambda}$ for a closure. 
\begin{df}
  Let $\Phi$ be the closure of a formula. A $(C_{\theta})_{\Lambda}$-path from $\Gamma$ is a sequence $\Gamma_0,\dots,\Gamma_n$ of maximal consistent sets in $\Phi$ such that $(\Gamma_k,\Gamma_{k+1}) \in (\mathop{\sim_i})^*_{\Lambda}\circ (\mathop{\approx_{\theta}})^*_{\Lambda}$ for all $k$, where $i$ is the last member of $\theta$, $0\leq k\leq n$, and $\Gamma_0 = \Gamma$. The length of $\Gamma_0,\dots,\Gamma_n$ is $n$. A $\varphi$-path is a sequence $\Gamma_0,\dots,\Gamma_n$ of maximal consistent sets in $\Phi$ such that $\varphi\in\Gamma_k$ for all $k$, where $0\leq k\leq n$.
\end{df}  

\begin{lemma}
  Let $\Phi$ be the closure of a formula; $M^*_{\Lambda}$ be a divided model for $\Phi$; $\Gamma$, $\Delta$ be maximal consistent sets in $W^*_{\Lambda}$; $i$ be the last member of $\theta$.
  If $\bigwedge\Gamma \wedge \neg[\approx]_{\theta} I_i \neg \bigwedge\Delta$ is consistent, then $(\Gamma,\Delta)\in ((\mathop{\sim_i})^*_{\Lambda}\circ (\mathop{\approx_{\theta}})^*_{\Lambda})$ for each divided model $M^*_{\Lambda}$ for $\Phi$. 
\end{lemma}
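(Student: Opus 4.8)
The plan is to build by hand the intermediate maximal consistent set that witnesses the composite step. Rewriting the hypothesis, $\neg[\approx]_{\theta} I_i\neg\bigwedge\Delta$ is $\langle\approx\rangle_{\theta}\langle I\rangle_i\bigwedge\Delta$, where $\langle\approx\rangle_{\theta}:=\neg[\approx]_{\theta}\neg$ and $\langle I\rangle_i:=\neg I_i\neg$. Since the paper's convention makes $(\sim_i)^*_{\Lambda}\circ(\approx_{\theta})^*_{\Lambda}$ read ``first $\approx_{\theta}$, then $\sim_i$,'' my goal is to produce a single $\Xi\in W^*_{\Lambda}$ with $(\Gamma,\Xi)\in(\approx_{\theta})^*_{\Lambda}$ and $(\Xi,\Delta)\in(\sim_i)^*_{\Lambda}$. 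Intuitively $\Xi$ is the world one reaches from $\Gamma$ along $\approx_{\theta}$ at which $\langle I\rangle_i\bigwedge\Delta$ still holds, and from which $\sim_i$ leads into $\Delta$.

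First I would pull the content out of the outer $[\approx]_{\theta}$ box. Put $\Gamma^{\approx}:=\{\psi\mid[\approx]_{\theta}\psi\in\Gamma\}$ and $\chi:=\langle I\rangle_i\bigwedge\Delta$; I claim $\bigwedge\Gamma^{\approx}\wedge\chi$ is consistent. If it were not, then $\vdash\bigwedge\Gamma^{\approx}\to\neg\chi$, and applying the rule $\mathrm{[\approx] G}$ followed by $\mathrm{K_{[\approx]}}$ gives $\vdash[\approx]_{\theta}\bigwedge\Gamma^{\approx}\to[\approx]_{\theta}\neg\chi$. As each conjunct $[\approx]_{\theta}\psi$ with $\psi\in\Gamma^{\approx}$ lies in $\Gamma$, normality of $[\approx]_{\theta}$ yields $\bigwedge\Gamma\vdash[\approx]_{\theta}\bigwedge\Gamma^{\approx}\vdash[\approx]_{\theta}\neg\chi=\neg\langle\approx\rangle_{\theta}\chi$, contradicting consistency of $\bigwedge\Gamma\wedge\langle\approx\rangle_{\theta}\chi$. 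Conjunctions such as $\bigwedge\Gamma^{\approx}$ may fall outside $\Phi$, but this is harmless since consistency is judged in the full logic.

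Next I would extend $\Gamma^{\approx}$ to a maximal consistent set $\Xi$ in $\Phi$ that keeps $\chi$ consistent. Because $\Phi$ is finite (Lemma 1), I can enumerate its members and run a Lindenbaum-style recursion from $\Xi_0:=\Gamma^{\approx}$: at each stage I try to add the next formula of $\Phi$, adding it if this preserves consistency of the conjunction with $\chi$ and otherwise adding its negation, where propositional reasoning guarantees one of the two choices works. Only formulas of $\Phi$ are added, so the result $\Xi$ is maximal consistent in $\Phi$, contains $\Gamma^{\approx}$, and satisfies that $\bigwedge\Xi\wedge\chi$ is consistent. From $\Gamma^{\approx}\subseteq\Xi$ and the definition of $(\approx_{\theta})^*$ I get $(\Gamma,\Xi)\in(\approx_{\theta})^*$; and from consistency of $\bigwedge\Xi\wedge\langle I\rangle_i\bigwedge\Delta$ the standard single-modality existence argument (if $I_i\psi\in\Xi$ yet $\psi\notin\Delta$, then $\bigwedge\Delta\vdash\neg\psi$, whence $\chi\vdash\neg I_i\psi$, a contradiction) gives $\{\psi\mid I_i\psi\in\Xi\}\subseteq\Delta$, i.e. $(\Xi,\Delta)\in(\sim_i)^*$.

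Finally I would place $\Xi$ in the right domain and assemble. Since $\Gamma\in W^*_{\Lambda}$, the pair $(\Lambda,\Gamma)$ lies in the relevant transitive closure; and because $(\sim_i)^*$ is reflexive (by $\mathrm{T_L}$, so $(\Xi,\Xi)\in(\sim_i)^*$), the step $(\Gamma,\Xi)\in(\approx_{\theta})^*$ is itself a step of $(\sim_i)^*\circ(\approx_{\theta})^*$, so $(\Lambda,\Xi)$ is in the transitive closure and hence $\Xi\in W^*_{\Lambda}$. Restricting to $W^*_{\Lambda}$ upgrades the two memberships to $(\Gamma,\Xi)\in(\approx_{\theta})^*_{\Lambda}$ and $(\Xi,\Delta)\in(\sim_i)^*_{\Lambda}$, giving $(\Gamma,\Delta)\in(\sim_i)^*_{\Lambda}\circ(\approx_{\theta})^*_{\Lambda}$. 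I expect the main obstacle to be exactly the intermediate construction: since $\chi=\langle I\rangle_i\bigwedge\Delta$ generally does not belong to $\Phi$, I cannot simply invoke Lemma 2, and must instead carry $\chi$ as an external side condition through a hand-run maximalization, taking care that every formula added stays inside $\Phi$ so that $\Xi$ is a legitimate element of $W^*$.
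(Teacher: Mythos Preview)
Your proof is correct and in fact more complete than the paper's own argument. The paper proceeds by asserting that it ``is sufficient'' to show that whenever $[\approx]_{\theta} I_i\varphi\in\Gamma$ one has $\varphi\in\Delta$, and then verifies this inclusion by a short contradiction argument. But that inclusion is not the definition of $(\Gamma,\Delta)\in(\sim_i)^*_{\Lambda}\circ(\approx_{\theta})^*_{\Lambda}$: membership in a relational composition demands an intermediate world, which the paper never produces. Your route differs precisely here: you explicitly build the witness $\Xi$ by running a Lindenbaum extension of $\Gamma^{\approx}=\{\psi\mid[\approx]_{\theta}\psi\in\Gamma\}$ inside $\Phi$ while carrying the external side condition $\chi=\langle I\rangle_i\bigwedge\Delta$, then check separately that $(\Gamma,\Xi)\in(\approx_{\theta})^*$, $(\Xi,\Delta)\in(\sim_i)^*$, and $\Xi\in W^*_{\Lambda}$. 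This is the standard two-step existence lemma from the common-knowledge completeness literature the paper cites, and your handling of the fact that $\chi\notin\Phi$ (keeping it as a side constraint rather than invoking Lemma~2 directly) is exactly right. The paper's version is terser but leaves the key step implicit; yours costs a paragraph but closes the gap.
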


\begin{proof}
It is sufficient to prove that we suppose that $\bigwedge\Gamma \wedge \neg[\approx]_{\theta} I_i \neg \bigwedge\Delta$ is consistent and $[\approx]_{\theta} I_i\varphi \in \Gamma$ for every $\varphi$, and we obtains $\varphi\in\Delta$.
We prove it by the contradiction. By the assumptions, $[\approx]_{\theta} I_i\varphi \wedge \neg[\approx]_{\theta} I_i\neg \bigwedge\Delta$ is consistent. Suppose that $\varphi\not\in\Delta$, then $\neg\varphi\in\Delta$. It follows that $[\approx]_{\theta} I_i\varphi \wedge\neg[\approx]_{\theta} I_i\varphi$ is consistent. This is the contradiction. Thus, $\varphi\in\Delta$. 
\end{proof}

\begin{lemma}
  Let $\Phi$ be the closure of a formula and $M^*_{\Lambda}$ be a divided model for $\Phi$.
  If $C_{\theta}\varphi\in\Phi$ and $\Gamma \in W^*_{\Lambda}$, then $C_{\theta}\varphi\in\Gamma$ iff every $(C_{\theta})_{\Lambda}$-path from $\Gamma$ is a $\varphi$-path and a $C_{\theta}\varphi$-path. 
\end{lemma}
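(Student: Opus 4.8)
The plan is to prove the two implications separately. The direction from left to right (from $C_\theta\varphi\in\Gamma$ to the path conditions) should follow from the fixpoint axiom $\mathrm{MIX}$ by induction on the length of a path, while the converse should come from the induction axiom $\mathrm{IND}$ together with the rule $\mathrm{CG}$, in the style of the standard completeness argument for a transitive-closure modality. Throughout I would rely on the canonical description of the relations, namely $(\Gamma,\Delta)\in(\sim_i)^*_\Lambda$ iff $\{\psi\mid I_i\psi\in\Gamma\}\subseteq\Delta$ and $(\Gamma,\Delta)\in(\approx_\theta)^*_\Lambda$ iff $\{\psi\mid[\approx]_\theta\psi\in\Gamma\}\subseteq\Delta$, and on the fact that every auxiliary formula used stays inside $\Phi$ by the closure conditions (in particular condition 8 puts $[\approx]_\theta I_iC_\theta\psi$ into $\Phi$ whenever $C_\theta\psi$ is).

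For the forward direction I would fix a $(C_\theta)_\Lambda$-path $\Gamma=\Gamma_0,\dots,\Gamma_n$ and prove by induction on $k$ that $C_\theta\varphi\in\Gamma_k$; the $\varphi$-path condition then drops out because $\mathrm{MIX}$ and maximality (Lemma 3) give $\varphi\in\Gamma_k$ from $C_\theta\varphi\in\Gamma_k$. The base case is the hypothesis. For the step, from $C_\theta\varphi\in\Gamma_k$ the axiom $\mathrm{MIX}$ yields $[\approx]_\theta I_iC_\theta\varphi\in\Gamma_k$; unfolding the single step $(\Gamma_k,\Gamma_{k+1})\in(\sim_i)^*_\Lambda\circ(\approx_\theta)^*_\Lambda$ through the intermediate world $\Xi$ with $(\Gamma_k,\Xi)\in(\approx_\theta)^*_\Lambda$ and $(\Xi,\Gamma_{k+1})\in(\sim_i)^*_\Lambda$, the canonical definitions propagate $[\approx]_\theta I_iC_\theta\varphi$ first to $I_iC_\theta\varphi\in\Xi$ and then to $C_\theta\varphi\in\Gamma_{k+1}$, closing the induction.

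For the converse I would argue by a single application of $\mathrm{IND}$. Let $R(\Gamma)$ be the set of endpoints of $(C_\theta)_\Lambda$-paths starting at $\Gamma$ (including $\Gamma$ itself). Since $\Phi$ is finite (Lemma 1), $W^*_\Lambda$ is finite and I may form the formula $\chi\coloneqq\bigvee_{\Delta\in R(\Gamma)}\bigwedge\Delta$. The hypothesis gives $\varphi\in\Delta$ for each $\Delta\in R(\Gamma)$, hence $\vdash\chi\to\varphi$. Because $R(\Gamma)$ is closed under the one-step relation, I would establish $\vdash\chi\to[\approx]_\theta I_i\chi$; then $\mathrm{CG}$ gives $\vdash C_\theta(\chi\to[\approx]_\theta I_i\chi)$ and $\mathrm{IND}$ yields $\vdash\chi\to C_\theta\chi$. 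Combining $\vdash\chi\to\varphi$ with $\mathrm{CG}$ and $\mathrm{K_C}$ gives $\vdash C_\theta\chi\to C_\theta\varphi$, so $\vdash\chi\to C_\theta\varphi$; since $\Gamma\in R(\Gamma)$ makes $\bigwedge\Gamma$ a disjunct of $\chi$, we obtain $\Gamma\vdash C_\theta\varphi$ and hence $C_\theta\varphi\in\Gamma$ by Lemma 3.

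The hard part will be the step $\vdash\chi\to[\approx]_\theta I_i\chi$, which rests on a successor lemma: writing $\beta_\Delta\coloneqq\bigvee\{\bigwedge\Delta'\mid(\Delta,\Delta')\in(\sim_i)^*\circ(\approx_\theta)^*\}$, I claim $\vdash\bigwedge\Delta\to[\approx]_\theta I_i\beta_\Delta$ for each $\Delta$. I would prove this in the base model, where $\bigvee_\Sigma\bigwedge\Sigma$ over all maximal consistent sets $\Sigma$ in $\Phi$ is provable (every full maximal consistent set restricts to one of them). If the claim failed, $\bigwedge\Delta\wedge\neg[\approx]_\theta I_i\beta_\Delta$ would be consistent; since $\neg\beta_\Delta$ is provably equivalent to the disjunction of $\bigwedge\Sigma$ over the non-successors $\Sigma$, and the dual of the normal box $[\approx]_\theta I_i$ distributes over that disjunction, some non-successor $\Sigma_0$ would make $\bigwedge\Delta\wedge\neg[\approx]_\theta I_i\neg\bigwedge\Sigma_0$ consistent; Lemma 6 then forces $(\Delta,\Sigma_0)\in(\sim_i)^*\circ(\approx_\theta)^*$, contradicting that $\Sigma_0$ is a non-successor. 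Monotonicity of $[\approx]_\theta I_i$ together with the closure of $R(\Gamma)$ under the one-step relation then upgrades the successor lemma to $\vdash\chi\to[\approx]_\theta I_i\chi$. The only extra care is bookkeeping: a successor of a world in $W^*_\Lambda$ again lies in $W^*_\Lambda$, so the successor lemma transfers from the base model to the divided model, and every formula manipulated stays within $\Phi$.
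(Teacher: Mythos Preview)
Your proposal is correct and follows essentially the same route as the paper: the forward direction is an induction on path length via $\mathrm{MIX}$, and the converse builds a finite disjunction $\chi$ of characteristic conjunctions, proves $\vdash\chi\to[\approx]_\theta I_i\chi$ by contradiction using Lemma~6, and finishes with $\mathrm{CG}$ and $\mathrm{IND}$. The only cosmetic difference is your choice of the index set for $\chi$: you take the reachable set $R(\Gamma)$ from the fixed $\Gamma$, whereas the paper takes the (possibly larger) set $S((C_\theta)_\Lambda,\varphi)$ of all worlds from which every $(C_\theta)_\Lambda$-path is a $\varphi$-path; since both sets contain $\Gamma$, contain only $\varphi$-worlds, and are closed under the one-step relation, the argument goes through identically either way.
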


\begin{proof}
($\Rightarrow$) We prove it by induction on the length of a $(C_{\theta})_{\Lambda}$-path.
\begin{itemize}
\setlength{\itemsep}{-1pt}
  \item For the base case, suppose that the length of a $(C_{\theta})_{\Lambda}$-path is $0$, $C_{\theta} \varphi\in\Phi$, and $C_{\theta}\varphi\in\Gamma$. $\Gamma =\Gamma_0 = \Gamma_n$. By $\mathrm{MIX}$, $\varphi\in\Gamma$.
  \item For induction steps, suppose that the length of a $(C_{\theta})_{\Lambda}$-path is $k+1$, $C_{\theta} \varphi\in\Phi$, and $C_{\theta}\varphi\in\Gamma$. By the induction hypothesis, $C_{\theta}\varphi\in\Gamma_k$. From $\mathrm{MIX}$ and the definition of $(\mathop{\sim_i})^*_{\Lambda}$ and $(\mathop{\approx_{\theta}})^*_{\Lambda}$, it follows that $\varphi$ and $C_{\theta}\varphi\in \Gamma_{k+1}$. 
\end{itemize}

($\Leftarrow$) Let $S((C_{\theta})_{\Lambda},\varphi)$ be a set of 
maximal consistent sets $\Delta$ in $W^*_{\Lambda}$ such that 
every $(C_{\theta})_{\Lambda}$-path from $\Delta$ is a $\varphi$-path. 
We introduce a special formula: 
\[\chi = \bigvee_{\Delta\in S((C_{\theta})_{\Lambda},\varphi)}\bigwedge \Delta.\] 

Suppose that every $(C_{\theta})_{\Lambda}$-path from $\Gamma$ is a $\varphi$-path. First, we need to prove these three: 
\begin{align*}
(1)\ \vdash &\bigwedge\Gamma\to \chi; 
\quad (2)\ \vdash \chi\to\varphi;
\quad (3)\ \vdash \chi\to [\approx]_{\theta} I_i\chi, 
\end{align*}
where $i$ is the last member of $\theta$.
\begin{itemize}
\setlength{\itemsep}{-1pt}
\item For (1), $\Gamma\in S((C_{\theta})_{\Lambda},\varphi)$ by the assumption. 
Thus, $\vdash \bigwedge\Gamma\to \chi$. 
\item For (2), since every $(C_{\theta})_{\Lambda}$-path from $\Delta$ is a $\varphi$-path, $\varphi\in\Delta$ for every $\Delta \in S((C_{\theta})_{\Lambda},\varphi)$. Thus, $\varphi$ is derived from $\chi$. 
\item For (3), we prove it by contradiction. Suppose $\chi\wedge\neg[\approx]_{\theta} I_i\chi$ is consistent. By the construction of $\chi$, there exists $\Delta$ such that $\bigwedge\Delta\wedge\neg[\approx]_{\theta} I_i\chi$ is consistent. $W^*_{\Lambda}$ can be interpreted as the whole set of combinations of formulas in $\Phi$ that satisfies the condition, and $\chi$ can be interpreted as a particular set of combinations. The formula $\neg\bigvee_{\Lambda\in W^*_{\Lambda}\setminus S(C_{\theta},\varphi)} \bigwedge\Lambda$ is equivalent to $\chi$ because the negation of the other combinations can represent a particular set of combinations represented by $\chi$. Therefore, $\bigwedge\Delta\wedge\neg[\approx]_{\theta} I_i \neg\bigvee_{\Xi\in W^*_{\Lambda}\setminus S(C_{\theta},\varphi)} \bigwedge\Xi$ is consistent. There is $\Xi$ such that $\bigwedge\Delta\wedge\neg[\approx]_{\theta} I_i \neg\bigwedge\Xi$ is consistent. By Lemma 6, $(\Delta,\Xi)\in (\mathop{\sim_i})^*_{\Lambda}\circ(\mathop{\approx_{\theta}})^*_{\Lambda}$ for every divided model $M^*_{\Lambda}$. There exists a $(C_{\theta})_{\Lambda}$-path from $\Delta$ that is not a $\varphi$-path. This is a contradiction. Thus, $\vdash \chi\to [\approx]_{\theta} I_i\chi$.
\end{itemize}
By $(3)$ and $\mathrm{CG}$, $\vdash C_{\theta}(\chi\to [\approx]_{\theta} 
I_i\chi)$. It follows that $\vdash\chi\to C_{\theta} \chi$ from $\mathrm{IND}$. 
By $(1)$ and $(2)$, $\vdash\bigwedge \Gamma\to C_{\theta}\varphi$.
Thus, $C_{\theta}\varphi\in\Gamma$.    
\end{proof}


The next task is the proof of the truth lemma, which is that for every divided model, a true formula at a world in a model is included in the world.
\begin{lemma}
  Let $\Phi$ be the closure of a formula and $M^*_{\Lambda}$ be a divided model for $\Phi$. For all $w\in W^*_{\Lambda}$ and every $\varphi\in \Phi$, $M^*_{\Lambda},w\vDash\varphi$ iff $\varphi\in w$.
\end{lemma}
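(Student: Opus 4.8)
The plan is to prove both directions simultaneously by induction on the structure of $\varphi\in\Phi$, with the inductive hypothesis that $M^*_{\Lambda},v\vDash\psi$ iff $\psi\in v$ holds for every proper subformula $\psi$ of $\varphi$ and every $v\in W^*_{\Lambda}$. The atomic case is immediate from the definition of $V^*_{\Lambda}$, and the Boolean cases $\neg\psi$ and $\psi\wedge\chi$ are the standard maximal-consistent-set arguments, using that closure conditions $2$ and $3$ keep the relevant subformulas and their negations inside $\Phi$, so that membership behaves classically.

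For the modal cases $I_i\psi$ and $[\approx]_{\theta}\psi$ I would run the usual two-part argument. The direction ``the box is in $w$, hence the formula holds'' is immediate from the definitions of $(\mathop{\sim_i})^*_{\Lambda}$ and $(\mathop{\approx_{\theta}})^*_{\Lambda}$: if $I_i\psi\in w$ and $(w,v)\in(\mathop{\sim_i})^*_{\Lambda}$ then $\psi\in v$, so the hypothesis gives $M^*_{\Lambda},v\vDash\psi$. The converse needs a witness argument: if $I_i\psi\notin w$, then $\{\chi\mid I_i\chi\in w\}\cup\{\neg\psi\}$ is consistent in $\Phi$, so by Lemma $2$ it extends to a maximal consistent $v$ with $(w,v)\in(\mathop{\sim_i})^*$ and $\psi\notin v$; the point requiring care is that $v$ must lie in $W^*_{\Lambda}$, which holds because $w$ is already reachable from $\Lambda$ and one further $\mathop{\sim_i}$-step is a $(\mathop{\sim_i}\circ\mathop{\approx_{\theta}})$-step by reflexivity of $\mathop{\approx_{\theta}}$, keeping $v$ in the generated domain. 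The $[\approx]_{\theta}$ case is identical with $(\mathop{\approx_{\theta}})^*_{\Lambda}$ replacing $(\mathop{\sim_i})^*_{\Lambda}$.

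The case $C_{\theta}\psi$ is where the transitive-closure machinery matters, but Lemma $7$ has already done the work. By the semantics, $M^*_{\Lambda},w\vDash C_{\theta}\psi$ iff $\psi$ holds at every $v$ with $(w,v)\in((\mathop{\sim_i})^*_{\Lambda}\circ(\mathop{\approx_{\theta}})^*_{\Lambda})^+$, where $i$ is the last member of $\theta$; since every such $v$ is the endpoint of a $(C_{\theta})_{\Lambda}$-path from $w$ and every intermediate node of such a path is itself reachable, the induction hypothesis turns this into ``every $(C_{\theta})_{\Lambda}$-path from $w$ is a $\psi$-path,'' which Lemma $7$ equates with $C_{\theta}\psi\in w$ (here $C_{\theta}\psi\in\Phi$ by closure condition $2$). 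The case $E_{\theta}\psi$ then reduces to the previous two: the semantic clause gives $M^*_{\Lambda},w\vDash E_{\theta}\psi$ iff both $M^*_{\Lambda},w\vDash A_{\theta}\psi$ and $M^*_{\Lambda},w\vDash C_{\theta}\psi$, and by closure condition $11$ together with axiom $\mathrm{KAC}$ and maximal consistency this matches $E_{\theta}\psi\in w$.

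I expect the genuine obstacle to be the awareness case $A_{\theta}\psi$, because the global definition of $(\mathscr{A}_{\theta})^*_{\Lambda}$ quantifies over all of $W^*_{\Lambda}$: $M^*_{\Lambda},w\vDash A_{\theta}\psi$ iff $\mathrm{At}(\psi)\subseteq(\mathscr{A}_{\theta})^*_{\Lambda}$, i.e.\ every atom of $\psi$ is aware in \emph{every} world of the divided model, whereas $A_{\theta}\psi\in w$ is a statement about the single world $w$. To bridge this I would first use the awareness axioms $\mathrm{AN}$, $\mathrm{ACN}$, $\mathrm{AA}$, $\mathrm{AL}$, $\mathrm{A[\approx]}$, $\mathrm{ACM}$, $\mathrm{AK}$ (and closure condition $5$) to reduce $A_{\theta}\psi$ provably to $\bigwedge_{p\in\mathrm{At}(\psi)}A_{\theta}p$, and then establish a uniformity claim: for every atom $p$ and all $\Gamma,\Gamma'\in W^*_{\Lambda}$, $A_{\theta}p\in\Gamma$ iff $A_{\theta}p\in\Gamma'$. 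This is precisely the role of $\mathrm{KA}$ and $\mathrm{NKA}$, which force $A_{\theta}p$ and $\neg A_{\theta}p$ to be transmitted across each $(\mathop{\sim_i})^*\circ(\mathop{\approx_{\theta'}})^*$ step; since $W^*_{\Lambda}$ is by construction connected to $\Lambda$ through finite chains of such steps, the awareness of $p$ cannot vary within the domain. Granting uniformity, $A_{\theta}\psi\in w$ becomes equivalent to $\mathrm{At}(\psi)\subseteq(\mathscr{A}_{\theta})^*_{\Lambda}$, which closes the case and completes the induction.
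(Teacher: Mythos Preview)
Your proposal is correct and follows essentially the same approach as the paper: induction on formula structure, with the awareness case reduced to atoms via $\mathrm{AN}$, $\mathrm{ACN}$, $\mathrm{AA}$, $\mathrm{AL}$, $\mathrm{A[\approx]}$, $\mathrm{ACM}$, $\mathrm{AK}$ and then settled by the uniformity argument from $\mathrm{KA}$/$\mathrm{NKA}$, the $C_\theta$ case delegated to Lemma~7, and $E_\theta$ unpacked via $\mathrm{KAC}$. The only presentational difference is that for $I_i\psi$ you argue by contraposition (build a witness $v$ when $I_i\psi\notin w$), whereas the paper argues directly via Lemma~3 that $\{\chi\mid I_i\chi\in w\}\vdash\psi$ and then applies $\mathrm{LG}$ and $\mathrm{K_L}$; these are the same argument in contrapositive form, and you are if anything more careful in noting explicitly why the witness lands in $W^*_{\Lambda}$.
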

\begin{proof}
  We prove it by induction on the structure of formulas. 
  Proofs of the base case and logical connectives are straightforward. 
\begin{itemize}
  \item For the case of $A_{\theta} \varphi$, 
  we prove it by induction 
  on the structure of $\varphi$.
  \begin{itemize}
  \item For the base case, suppose that $M^*_{\Lambda},w\vDash A_{\theta} p$, then $p\in\{q\mid \text{for all } w, A_{\theta} q\in w\}$. Thus, $A_{\theta} p\in w$. As for the reverse direction, suppose that $A_{\theta} p \in w$. From $\mathrm{KA, NKA}$, $\bigwedge_{\theta'\in\Theta} C_{\theta'} A_{\theta} p\in w$. It follows that for all $v$ in $W^*_{\Lambda}$, $A_{\theta} p\in v$ from $\mathrm{MIX}$. Thus, $M^*_{\Lambda},w\vDash A_{\theta} p$.
  \item For the other cases, we obtain the desired proof by induction hypothesis and decomposing the formula using corresponding axioms: $\mathrm{AN}$, $\mathrm{ACN}$, $\mathrm{AA}$, $\mathrm{AL}$, $\mathrm{A[\approx]}$, $\mathrm{ACM}$, and $\mathrm{AK}$. We show the strategy by taking $\mathrm{AN}$ as an example.
    \begin{itemize}
      \item For the case of $\neg\psi$, suppose that $M^*_{\Lambda},w\vDash A_{\theta}\neg\psi$, which means that $At(\neg \psi)\subseteq \{q\mid \text{for all } w, A_{\theta} q\in w\}$. Since $At(\neg \psi) = At(\psi)$, $M^*_{\Lambda},w\vDash A_{\theta}\psi$. $A_{\theta}\neg \psi\in w$ by induction hypothesis and $\mathrm{AN}$. The reverse direction is proven similarly.
    \end{itemize}
  \end{itemize}
  \item For the case of $I_i \varphi$ and $[\approx]_{\theta}\varphi$, it is proven in a similar strategy as \textbf{S5}. We show only the case of $I_i\varphi$ here. From left to right, suppose that $M^*_{\Lambda},w\vDash I_i\varphi$, which means that for all $v$ such that $\{\psi\mid I_i\psi\in w\}\subseteq v$, $M^*_{\Lambda},v\vDash \varphi$. By induction hypothesis, it follows that for all $v$ such that $\{\psi\mid I_i\psi\in w\}\subseteq v$, $\varphi\in v$.
    By Lemma 3, $\{\psi\mid I_i\psi\in w\}\vdash\varphi$. There exists a finite subset $\{\psi_1,\dots,\psi_n\}$ of $\{\psi\mid I_i\psi\in w\}$ such that $\vdash (\psi_1\wedge\dots\wedge \psi_n)\to \varphi$. It follows that $\vdash (I_i\psi_1\wedge\dots\wedge I_i\psi_n) \to I_i\varphi$ from $\mathrm{LG}$ and $\mathrm{K_L}$. Since a set $\{I_i\psi_1,\dots,I_i\psi_n\}$ is included in $w$, $w\vdash I_i\varphi$. Thus, $I_i\varphi\in w$. The reverse direction is straightforward.
  \item For the case of $C_{\theta}$, suppose that $M^*_{\Lambda},w\vDash C_{\theta}\varphi$. For all $v$ such that $(w,v)\in ( (\mathop{\sim_i})^*_{\Lambda}\circ (\mathop{\approx_{\theta}})^*_{\Lambda})^+$, $M^*_{\Lambda},v\vDash \varphi$. It means every $(C_{\theta})_{\Lambda}$-path from $w$ is a $\varphi$-path. By Lemma 7, $C_{\theta}\varphi\in w$.
  \item For the case of $E_{\theta}$, it is proven by decomposing into $A_{\theta}$ and $E_{\theta}$. Suppose that $M^*_{\Lambda},w\vDash E_{\theta}\varphi$, then $M^*_{\Lambda},w\vDash A_{\theta}\varphi$ and $M^*_{\Lambda},w\vDash C_{\theta}\varphi$. It follows from the proofs for $A_{\theta}$ and $C_{\theta}$ that $A_{\theta}\varphi\in w$ and $C_{\theta}\varphi\in w$. By $\mathrm{KAC}$, $E_{\theta}\varphi\in w$
  \end{itemize}
  \end{proof}

\begin{lemma}
Let $\Phi$ be the closure of a formula and $\Gamma$ be a maximal consistent set in $\Phi$. For every $\varphi\in\Phi$, if $\varphi\in \Gamma$ for every maximal consistent set $\Gamma$ in $\Phi$, then $\vdash\varphi$.
\end{lemma}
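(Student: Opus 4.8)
The plan is to obtain this as an immediate corollary of Lemma 3 by specializing its set of hypotheses to the empty set. Lemma 3 states, for every $\varphi\in\Phi$, that $\Gamma\vdash\varphi$ holds if and only if $\varphi\in\Delta$ for every maximal consistent set $\Delta$ in $\Phi$ with $\Gamma\subseteq\Delta$. Choosing $\Gamma=\emptyset$ makes the side condition $\Gamma\subseteq\Delta$ vacuously satisfied by every maximal consistent set, so the right-hand side becomes exactly ``$\varphi\in\Delta$ for every maximal consistent set $\Delta$ in $\Phi$,'' which is precisely the hypothesis of the present statement.

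First I would verify that $\emptyset$ is a consistent set in $\Phi$ in the sense of the definition of a consistent set: trivially $\emptyset\subseteq\Phi$, and $\emptyset\nvdash\bot$ because the system $\mathbf{ALPC}$ is consistent — this follows from soundness (Theorem 1), since $\bot$ is not valid. Hence Lemma 3 applies with $\Gamma=\emptyset$. Next I would unfold the deducibility notation on the left: by the definition of deducibility from a set, $\emptyset\vdash\varphi$ means there is a finite subset $\Gamma'\subseteq\emptyset$ with $\vdash\bigwedge\Gamma'\to\varphi$; the only candidate is $\Gamma'=\emptyset$, and with the convention $\bigwedge\emptyset=\top$ this reduces to $\vdash\top\to\varphi$, which by $\mathrm{TAUT}$ and $\mathrm{MP}$ is equivalent to $\vdash\varphi$. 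Combining these, $\vdash\varphi$ holds iff $\varphi\in\Delta$ for every maximal consistent set $\Delta$ in $\Phi$, and the right-to-left direction is exactly the claim.

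There is essentially no hard step: the substantive work is already carried by Lemma 3, which is itself a Lindenbaum-style corollary, so only the bookkeeping of the empty-hypothesis case remains. The one point deserving care is confirming that $\emptyset$ legitimately counts as a consistent set and that $\emptyset\vdash\varphi$ collapses to $\vdash\varphi$. If one prefers to avoid reasoning with the empty set, an equivalent route is direct contraposition: assuming $\nvdash\varphi$, note that $\{\neg\varphi\}$ is a consistent set in $\Phi$, extend it by Lemma 2 to a maximal consistent set $\Delta$ with $\varphi\notin\Delta$, and observe that this contradicts the hypothesis that $\varphi$ lies in every maximal consistent set. Finally, I would remark that together with the truth lemma (Lemma 8) this yields completeness: if $\nvdash\varphi$ then some maximal consistent set $\Gamma$ in $cl(\varphi)$ omits $\varphi$, and evaluating in the divided model $M^*_{\Gamma}$ falsifies $\varphi$ at $\Gamma$.
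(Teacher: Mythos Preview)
Your proposal is correct. In fact, the alternative route you sketch at the end --- contraposition via Lemma~2, extending $\{\neg\varphi\}$ to a maximal consistent set omitting $\varphi$ --- is exactly the paper's proof. Your primary route through Lemma~3 with $\Gamma=\emptyset$ is a mild variant: it works, but it threads through an extra lemma and requires the small side-verifications you flag (consistency of $\emptyset$, and that $\emptyset\vdash\varphi$ collapses to $\vdash\varphi$). The direct contraposition is shorter and avoids those bookkeeping points, which is presumably why the paper chose it; your Lemma~3 route has the minor conceptual advantage of exhibiting the result as a literal specialization of an already-proved equivalence.
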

\begin{proof}
  We prove it by contraposition. Suppose $\nvdash\varphi$, then $\{\neg \varphi\}$ is a consistent set in $\Phi$. By Lemma 2, there is a maximal consistent set in $\Phi$ that does not contain $\varphi$.
\end{proof}

Finally, we prove the completeness theorem. To prove that a formula is a theorem, it is necessary that every maximal consistent set in the closure of the formula includes the formula. Since the assumption of the theorem leads to the formula being true at all worlds in every divided model for the closure of the formula, we obtain the desired fact.
\begin{thm}
  For every $\varphi\in\mathcal{L}_{(\mathcal{P,G},\Theta)}$, if $\vDash\varphi$, then $\vdash\varphi$.
\end{thm}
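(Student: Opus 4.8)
The plan is to run a Henkin-style argument that assembles the lemmas already established, following the route sketched just before the statement. Assume $\vDash\varphi$ and set $\Phi = cl(\varphi)$; by Lemma 1 this closure is finite, and by the first clause of the definition of $cl$ we have $\varphi\in\Phi$. By Lemma 9 it suffices to show that $\varphi\in\Gamma$ for every maximal consistent set $\Gamma$ in $\Phi$, since this yields $\vdash\varphi$. So I would fix an arbitrary maximal consistent set $\Gamma$ in $\Phi$ and consider the divided model $M^*_\Gamma$ generated by $\Gamma$ for $\Phi$.

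The first thing to check is that $\Gamma$ itself lies in the domain $W^*_\Gamma$. By construction this requires $(\Gamma,\Gamma)\in\bigcup_{i\in\mathcal{G},\theta\in\Theta}((\sim_i)^*\circ(\approx_\theta)^*)^+$. Both $(\sim_i)^*$ and $(\approx_\theta)^*$ are reflexive on maximal consistent sets — the former by $\mathrm{T_L}$, which gives $\{\psi\mid I_i\psi\in\Gamma\}\subseteq\Gamma$, and the latter analogously by $\mathrm{T_{[\approx]}}$ — so their composition is reflexive and its transitive closure contains $(\Gamma,\Gamma)$. Hence $\Gamma\in W^*_\Gamma$.

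Next, by Lemma 4 the divided model $M^*_\Gamma$ is a genuine epistemic model with awareness, and Lemmas 5 through 7 ensure its relations and its treatment of $C_\theta$ match the intended semantics, which is what the Truth Lemma rests on. Because $\varphi$ is valid, it holds at every pointed model over every epistemic model with awareness; in particular $M^*_\Gamma,\Gamma\vDash\varphi$. Applying the Truth Lemma (Lemma 8), which covers every $\psi\in\Phi$ at every world of $M^*_\Gamma$, I conclude $\varphi\in\Gamma$. Since $\Gamma$ was arbitrary, $\varphi$ belongs to every maximal consistent set in $\Phi$, and Lemma 9 then delivers $\vdash\varphi$.

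The substantive work has already been discharged in the earlier lemmas — especially the Truth Lemma for $C_\theta$ through the $(C_\theta)_\Lambda$-path machinery of Lemmas 6 and 7, which is what tames the loss of compactness by restricting attention to the finite closure $\Phi$. The only genuinely new steps are bookkeeping: verifying that each generating maximal consistent set inhabits its own divided model (the reflexivity observation above) and that validity over the class of epistemic models with awareness transfers to these finite-closure divided models, which is precisely what Lemma 4 secures. I expect no serious obstacle beyond keeping straight the interplay between the global awareness definition and the family of per-$\Lambda$ divided models it forces us to use.
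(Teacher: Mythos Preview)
Your proposal is correct and follows essentially the same route as the paper's own proof: pass from validity to truth in every divided model via Lemma 4, translate truth to membership via the Truth Lemma (Lemma 8), and conclude by Lemma 9. Your version is in fact slightly more careful than the paper's, since you explicitly verify that the generating set $\Gamma$ lies in $W^*_\Gamma$ by the reflexivity argument from $\mathrm{T_L}$ and $\mathrm{T_{[\approx]}}$, a point the paper leaves implicit.
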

\begin{proof}
  Suppose that $\vDash\varphi$. For every divided models $M^*_{\Lambda}$ for the closure of $\varphi$, $M^*_{\Lambda},w\vDash\varphi$ by Lemma 4.  Thus, for every maximal consistent set $w$ in $\Phi$, $\varphi\in w$ by Lemma 8. Therefore, $\vdash\varphi$ by Lemma 9.
\end{proof}

\section{Related Work}
There are several studies that are based on a similar idea as this paper, which is to connect a state of an agent's awareness with a distinction of possible worlds. \cite{van2009awareness,ditmarsch2011becoming,van2018implicit} provided \textit{awareness bisimilation}. This depends on a set of atomic propositions and is defined between the models that cannot be distinguished by formulas consisting only of the atomic propositions. Their logic represents the indistinguishment of possible worlds by lack of awareness, as with our logic. The main difference is the definition of explicit knowledge. On the definition of explicit knowledge in our logic, an agent considers from an evaluate world $w$ not only to a world $s$ with a different valuation only for propositions of which an agent is not aware but also to a world $t$ accessible from $s$, unlike their logic.
\cite{yudai2022-1} defined an indistinguishable relation and proposed a different definition of explicit knowledge from \cite{fagin1988belief}. Those ideas are used as the basis of our logic. They also incorporated the notion of an agent's viewpoint to represent what ``an agent is aware of a proposition from another agent's viewpoint,'' which is represented as $A^i_j \varphi$. Our logic proposes the notion of $a$'s belief in $b$'s awareness instead of the notion of agent $b$'s awareness from $a$'s viewpoint. Moreover, their logic does not handle a chain of viewpoints with a depth of more than 3, but our logic handles a chain of belief for awareness with a finite depth.

Purely semantic approaches to awareness are active in the field of economics. It is also called the event-based approach, in which the notion of \textit{events} that are represented as a set of possible worlds is employied. An agent's awareness and knowledge are expressed as an operator on events. The logic system proposed in \cite{modica1994awareness, modica1999unawareness} is the early work of the approach. \cite{halpern2001alternative} found it to be equivalent to a part of the logic in \cite{fagin1988belief}. \cite{heifetz2006interactive,heifetza2008canonical} proposed \textit{unawareness frame} that is a lattice on disjoint state-spaces with a partial order. One disjoint state-space captures one particular horizon of propositions and works a restriction of propositions instead of an awareness set. A state-space $S$ is greater than $S'$ means that states of $S$ describe situations with a richer vocabulary than states of $S'$. One state-space can be interpreted as subjective descriptions of situations in an agent's mind. The formalization is applied to a generalization of \cite{aumann1976agreeing} \cite{heifetz2013unawareness}. Although this study does not explicitly have a chain concerning awareness, we may find some relation to our logic. 

In game theory, a formalization of the concept of awareness/unawareness has been studied in order to incorporate it into games. \cite{feinberg2005games,Feinberg+2021+433+488} defined a game with incomplete awareness and also defined a higher order awareness or views as a sequence of agents, which are an iteration of agents' awareness or views. 

\textit{Epistemic worlds semantics} provided in \cite{kaneko2003epistemic} is for epistemic logic of shallow depths in which nested occurrences of belief operators are bounded. The domain in the semantics is the disjoint union of a set $W_e$ of possible worlds indexed by a sequence of agents, and a nested belief is evaluated within the set of possible worlds indexed by the sequence of agents corresponding to the order of agents in the nested belief. Although thier study does not treat the concept of awareness, the idea that the subject in a nested belief or knowledge bounds them is similar as our logic. The logic in \cite{grossi2015ceteris} has an operator $[X]$ representing that a proposition holds at all the possible worlds with the same valuation for all the elements of a given set $X$ of atomic propositions. \textit{Team semantics} \cite{sano2015axiomatizing} used in dependence logic formalizes that an atomic proposition is true at all the elements of a given subset of possible worlds. This subset called a \textit{term}, supports a formula.


\section{Conclusion}
This paper provided Awareness Logic with Partitions and Chains ($\mathcal{ALPC}$) for formalizing nested explicit knowledge. We incorporated an indistinguish relation on possible worlds that changes in conjunction with a state of awareness and a chain of belief for awareness. These reflect two ideas: the effect of an agent's awareness on her own reasoning and a state of an agent's awareness in another agent's mind. Employing this framework, we demonstrated an example where each agent's knowledge that each other agent who has different states of awareness actually knows at a given moment, which is nested explicit knowledge, could be logically represented.

Our contributions of this logic are two-fold. From a logical viewpoint, 
we defined the syntax and the semantics of $\mathcal{ALPC}$ and proved its completeness. From the viewpoint of applicability to the real world, we showed the architecture to explain the strategic behavior of a rational agent in society or game theory. We expect that the logic can be applied to computer science and game theory by describing and analyzing strategic behavior in a game and practical agent communication.

There are several directions in the future. Extending our logic to represent knowledge of a group is one direction. In particular, formalizing common knowledge is an interesting topic. In decision-making in strategic situations where a player makes a decision with knowledge of the other's action, it is valuable that a proposition is common knowledge from the perspective of an outside observer with complete awareness. However, it is equally valuable that a proposition is common knowledge from the perspective of an agent with incomplete awareness. Our logic can be used to represent such common knowledge. Furthermore, incorporating an epistemic action that changes a state of awareness is another direction. This extension makes it possible to formalize changing knowledge by communicative actions \cite{van2011logical}. There are already several studies on dynamics of awareness \cite{fernandez2021awareness,grossi2015syntactic,van2010dynamics}, and we can refer to those results. On the technical side, exploring a relationship to other related logic mentioned in Section 5 is one of the directions. In addition, our logic is applicable to the studies dealing with multiple agents' reasoning, such as a description and analysis of a game that takes into account players' awareness of possible strategies. 

\section*{Acknowledgements}
I thank Nobu-Yuki Suzuki, Satoshi Tojo, and Kosuke Udatsu for their insightful comments.
\bibliographystyle{abbrv}
\bibliography{refs}

\end{document}